\newmdenv{allfour}
\newmdenv[leftline=false,rightline=false]{topbot}
\newmdenv[topline=false,rightline=false]{leftbot}
\newmdenv[topline=false,leftline=false]{rightbot}
\newmdenv[topline=false,rightline=false,leftline=false]{bottom}
\newcommand{\tr}[1]{#1^{T}}
\newcommand{\trace}{\mathop{\text{Trace}}}
\newtheorem{theorem}{Theorem}[section]
\newtheorem{definition}{Definition}[section]
\newtheorem{prop}{Proposition}[section]
\newenvironment{nscenter}
 {\parskip=0pt\par\nopagebreak\centering}
 {\par\noindent\ignorespacesafterend}
\definecolor{col8}{gray}{1}
\definecolor{col7}{gray}{0.9}
\definecolor{col6}{gray}{0.8}
\definecolor{col5}{gray}{0.7}
\definecolor{col4}{gray}{0.6}
\definecolor{col3}{gray}{0.5}
\definecolor{col2}{gray}{0.4}
\definecolor{col1}{gray}{0.35}
\providecommand{\keywords}[1]
{
  \small    
  \textbf{\textit{Keywords---}} #1
}
\begin{document}

\setlength{\abovedisplayskip}{1ex}
\setlength{\belowdisplayskip}{1ex}

%% Voilà mes légendes de figures comme je les aime:
\makeatletter
\def\figurename{{\protect\sc \protect\small\bfseries Fig.}}
\def\f@ffrench{\protect\figurename\space{\protect\small\bf \thefigure}\space}
\let\fnum@figure\f@ffrench%
\let\captionORI\caption
\def\caption#1{\captionORI{\rm\small #1}}
\makeatother

%%%%%%%%%%%%%%%%%%%%%%%%%%%%%%%%%%%%%%%%%%%%%%%%%%%%%%%%%% Couverture:
\title{Efficient simulation of Gaussian Markov random fields by Chebyshev polynomial approximation}

\author[1,2]{Mike PEREIRA%
  \thanks{Contact: \texttt{mike.pereira@mines-paristech.fr}}}  
\author[1]{Nicolas DESASSIS%
  \thanks{Contact: \texttt{nicolas.desassis@mines-paristech.fr}}}
  
\affil[1]{Geostatistics team, Geosciences, MINES ParisTech, PSL Research University\\
   Fontainebleau, France}
\affil[2]{Estimages\\
   Paris, France}

\date{\today}

\maketitle

\begin{abstract}
This paper presents an algorithm to simulate Gaussian random vectors  whose precision matrix can be expressed as a polynomial of a sparse matrix. This situation arises in particular when simulating Gaussian Markov random fields obtained by the finite elements discretization of the solutions of some stochastic partial derivative equations. The proposed algorithm uses a Chebyshev polynomial approximation to compute simulated vectors with a linear complexity. This method is asymptotically exact as the approximation order grows. Criteria based on tests of the statistical properties of the produced vectors are derived to determine minimal orders of approximation.
\end{abstract}

\keywords{GMRF, SPDE, Finite element method, Chebyshev approximation}

\vspace{9ex}

%%%%%%%%%%%%%%%%%%%%%%%%%%%%%%%%%%%%%%%%%%%%%%%%%%%%%%%%%%

\section{Introduction}

Gaussian random fields (GRF) are widely used to model spatially correlated data  in environmental and earth sciences \cite{chiles1999geost,lantuejoul2013geostatistical,wackernagel2013multivariate}. The stochastic simulation of such fields (also called geostatistical simulation) is a common process in risk analysis \cite{chiles1999geost}. Indeed, each simulation is seen as an alternate but plausible version of the reality. Spatial uncertainty can then be assessed in problems where the variables of interest are partially observed through comparisons over a set of simulations.
There are two main classes of simulation algorithms. Exact algorithms aim at reproducing exactly the statistical properties of a targeted model. They include methods based on the factorization of covariance matrices \cite{davis1987sim} or on the spectral properties of random fields \cite{Pardo1993,dietrich1993fast}. On the other hand, approximate algorithms generate simulations with a nearly multi-Gaussian spatial distribution or with approximated covariance properties. They were introduced to tackle the large-scale limitations of exact algorithms. Examples of such algorithms include the turning bands \cite{matheron1973intrinsic,emery2006tbsim} and the continuous spectral methods \cite{shinozuka1972digital}, and also the sequential Gaussian simulation algorithm \cite{deutsch1998}. 

Continuous Markov random fields are particularly suited models for geostatistical simulations thanks to the computational efficiency they provide. Precisely, the sparsity of the precision matrices of their discretization allows fast computations of samples (and likelihood) \cite{rue2005gaussian}. When stationary and isotropic, these random fields have a spectral density, that is the Fourier transform of the covariance function, of the form $f(\omega)=1/\mathrm{P}(\Vert \omega\Vert^2)$ where $\mathrm{P}$ is a real  strictly positive polynomial on $\mathbb{R}_+$ \cite{rozanov1977}. Equivalently, they can be seen as solutions of the stochastic partial derivative equation (SPDE) defined as \cite{rozanov1977, lang2011, simspon2016}: 
\begin{equation}
\mathrm{P}(-\Delta)^{1/2}Z=\mathcal{W}
\label{spde_gen}
\end{equation}
where $\mathcal{W}$ is a Gaussian white noise and $\mathrm{P}(-\Delta)^{1/2}$ is the differential operator defined as:
$$\mathrm{P}(-\Delta)^{1/2}[.]=\mathscr{F}^{-1}\left[w\mapsto \sqrt{\mathrm{P}(\Vert \omega \Vert ^2)}\mathscr{F}[.](\omega)\right]$$
where $\mathscr{F}$ denotes the Fourier transform operator. 

For instance, following the results from Whittle \cite{whittle1954stationary}, Lindgren et al. \cite{lindgren2011explicit} considers stationary solutions of the SPDE:
\begin{equation}
(\kappa^2-\Delta)^{\alpha/2}Z=\tau\mathcal{W}
\label{spdeMatern}
\end{equation}
with  $\kappa>0$, $\tau >0$ and $\alpha$ an integer greater than half the dimension of the space, to characterize GRFs with Matérn covariance (or Matérn fields). They even use this result to extend isotropic Matérn fields to manifolds, and to non-stationary and even oscillating formulations \cite{lindgren2011explicit}.

SPDE \eqref{spde_gen} can be numerically solved using the finite element method. In that case, it is solved on a triangulated domain, and a finite element representation of the solution is built as: 
\begin{equation*}
Z(x)=\sum\limits_{i} z_i\psi_i(x)
\end{equation*}
for finite and deterministic basis functions $\lbrace \psi_i\rbrace$ and Gaussian weights $\lbrace z_i \rbrace$. Simulating a solution is then equivalent to simply simulate the Gaussian weights $\lbrace z_i \rbrace$. In particular, the precision matrix of these weights can be specified using weak formulations of the SPDE, and has the form:
\begin{equation}
\bm{Q}=\bm{D}\mathrm{P}(\bm S) \bm{D}
\label{Q_gen}
\end{equation} 
where $\bm D$ is a diagonal matrix with strictly positive entries and $\bm S$ is a real, symmetric and positive semi-definite matrix. In particular, when piecewise linear basis functions are considered, $\bm S$ is a very \textit{sparse} matrix, whose non-zero entries correspond to adjacent nodes in the triangulation. 

Given that the precision matrix is known, the simulation of solutions is generally performed by matrix factorisation methods involving the Cholesky decomposition. Even if the sparsity inherited from the Markovian properties of the field reduces the complexity of an otherwise too expensive factorisation \cite{davis2006direct}, computation and storage problems still arise for large simulation domains or when the dimension of the space increases \cite{simpson2008fast}. Methods based on  iterative techniques for solving sparse linear systems were proposed to avoid the Cholesky decomposition \cite{simpson2008fast,Aune2013}. They rely on the fact that the product of $\bm Q$ and a vector is assumed to be inexpensive, and on finding preconditioners to reduce the number of iterations needed for the algorithm to converge \cite{simpson2013scalable}.

This article introduces instead a computationally efficient algorithm to simulate any Gaussian random vectors whose precision matrix can be expressed as \eqref{Q_gen}. This algorithm is based on the construction of a polynomial approximation of a factorisation of $\bm Q$. It then relies on matrix-vector products between (a matrix as sparse as) $\bm S$ and vectors. It can produce simulations of vectors with a linear complexity, proportional to the number of non-zero entries of $\bm S$.

The simulation algorithm presented in this article is equivalent to a filtering technique used in Graph signal processing (GSP) \cite{hammond2011wavelets}. GSP is an emerging field focusing on developing tools  to process complex data that are embedded on a graph, i.e. a structure composed of a set of objects, called vertices, and pairwise relationships between them, the edges \cite{bondy1976graph}. Such data arise naturally in applications such as social, energy, transportation and neural networks. They are modelled as variables indexed by the vertices of the graph, named graph signals. Generalizations of classical signal processing notions and tools, such as the Fourier transform, filtering and translation operators are then used to study these signals \cite{shuman2013emerging}. 

The outline of the article is as follows. In Section \ref{sec:sim}, methods for the simulation of Gaussian random vectors with known precision matrix are reviewed. In Section \ref{sec:pol}, the main idea behind the proposed algorithm is introduced and attention is devoted to the polynomial approximation it is based on. In Section \ref{sec:alg}, the overall workflow of the algorithm is presented, and its complexity and induced error are calculated. Then the framework of statistical tests is used to assess whether the vectors produced by the algorithm respect their targeted distribution, and criteria on the minimal order of approximation are deduced. Finally, in Section \ref{sec:app}, examples of application of the algorithm are presented, highlighting the great adaptability of the algorithm for the simulation of Matérn fields and their generalizations.

\section{Simulation of Gaussian random vectors}
\label{sec:sim}

The aim is to simulate a zero-mean Gaussian random vector (GRV) whose precision matrix $\bm Q$ is given by:
\begin{equation}
\bm{Q}=\bm{D}\mathrm{P}(\bm S) \bm{D}=\bm{D}\bigg(\sum\limits_{l=0}^{L}b_l \bm S^l \bigg)\bm{D}
\label{qpolmat}
\end{equation} 
where $\mathrm{P} : x\mapsto \sum_{l=0}^{L}b_l x^l$ is a strictly positive polynomial function on $\mathbb{R}_+$; $\bm S$ is a real, sparse, symmetric and positive semi-definite matrix; and $\bm D$ is an invertible diagonal matrix.
\subsection{Simulation by matrix factorisation}
\label{subsec:matfac}
A non-conditional simulation of a zero-mean GRV $\bm z$ with known precision matrix $\bm Q$ can be obtained through:
\begin{equation}
\bm z = \bm L \bm\varepsilon
\label{sim_mat}
\end{equation}
where $\bm\varepsilon$ is a vector with independent zero-mean, unit variance and normally distributed random components and $\bm L$ is a matrix such that \cite{Gentle2009}:
\begin{equation}
\bm L \bm L^T=\bm Q^{-1}
\label{decomp}
\end{equation}
The most widely used candidate for such a matrix $\bm L$ is the Cholesky decomposition of $\bm Q^{-1}$ \cite{horn1990matrix}. However, in the considered setting, only the precision matrix $\bm{Q}$ is known and not its inverse. Therefore, the simulation process can be performed in two steps:
\vspace{0.5ex}

\begin{topbot}[innertopmargin=2ex,innerbottommargin=1ex]
\vspace{-1ex}
\textbf{Workflow}: Simulation of a random vector using Cholesky decomposition

\vspace{-2ex}
\noindent\makebox[\linewidth]{\rule{\textwidth}{0.5pt}}\\\textbf{Require}:  A precision matrix $\bm Q$. A vector of independent standard Gaussian values $\bm\varepsilon$.\\
\textbf{Output}: A simulated vector $\bm z$ with precision matrix $\bm{Q}$.
\vspace{-1ex}
\begin{enumerate}
\item Compute $\bm Q_{\text{chol}}$ the Cholesky decomposition of the precision matrix $\bm Q$.
\item Compute the simulated vector $\bm z$ as the solution of the following linear system: 
$$\tr{\bm Q_{\text{chol}}}\bm z=\bm\varepsilon$$
\end{enumerate}
\end{topbot}

\vspace{1ex}
Two performance issues arise from this workflow. First, the computation of the Cholesky decomposition of  $\bm Q$ is intractable for large problems or when the matrix is not sparse enough \cite{simpson2008fast}. Then, once computed, this decomposition must be stored, and is used to solve a linear system. Both these tasks grow more expansive as the size or the filling of $\bm Q_{\text{chol}}$ increases. The idea behind the algorithm presented in this article is to find another candidate for $\bm L$ that would take advantage of the fact that the precision matrix has the form \eqref{qpolmat}. 

\subsection{Simulation by eigendecomposition}
%\subsubsection*{Formulation}
The matrix $\bm S$ being real and symmetric, it is diagonalizable with non-negative eigenvalues $\lambda_1, \dots, \lambda_n$ and eigenvectors that form an orthonormal basis of $\mathbb{R}^n$ (with $n$ the size of the matrix $\bm S$). Therefore there exists a matrix $\bm V$ satisfying $\bm V^{-1}=\bm V^T$ and: 
\begin{equation*}
\bm S=\bm V \begin{pmatrix} \lambda_1 & & \\  & \ddots &  \\  &  &  \lambda_n \end{pmatrix} \bm V^{-1}
\end{equation*}
It can be shown that for any real polynomial $\mathrm{P}$, $\mathrm{P}(\bm S):=\sum_{l=0}^{L}b_l \bm S^l$ is also a real symmetric matrix, and is diagonalizable in the same eigenbasis as $\bm S$. In particular, the eigenvalues of $\mathrm{P}(\bm S)$ are $\mathrm{P}(\lambda_1), \dots, \mathrm{P}(\lambda_n)$.  

Let's then denote $\mathrm{P}(\bm S)^{-1/2}$ the matrix defined for strictly positive polynomials $\mathrm{P}$ by:
\begin{equation}
\mathrm{P}(\bm S)^{-1/2}:=\bm V \left(\begin{smallmatrix} 1/\sqrt{\mathrm{P}(\lambda_1)} & & \\  & \ddots &  \\  &  &  1/\sqrt{\mathrm{P}(\lambda_n)} \end{smallmatrix} \right) \bm V^{-1}
\label{sq_def}
\end{equation}
Given that this matrix is symmetric, $\bm L = \bm D^{-1} \mathrm{P}(\bm S)^{-1/2}$ satisfies \eqref{decomp}. So using \eqref{sim_mat}, a field $\bm z$ with precision matrix $\bm Q$ can be generated through:
\begin{equation}
\bm z = \bm D^{-1} \mathrm{P}(\bm S)^{-1/2}\bm\varepsilon
\label{L_expr}
\end{equation} 

A direct way to compute the matrix $\mathrm{P}(\bm S)^{-1/2}$ is through \eqref{sq_def} which supposes to diagonalize the matrix $\bm S$, and store its eigenvalues $\lambda_1, \dots, \lambda_n$ and eigenvectors $\bm V$. The cost associated to this approach is generally prohibitive as it requires $\mathcal{O}(n^3)$ operations and a storage size of $\mathcal{O}(n^2)$. 

However, a particular case when this is feasible is worth noticing. In \cite{rue2005gaussian}, Rue and Held show that the precision matrix of a stationary Gaussian Markov random field defined on a torus (i.e. a regular lattice with  cyclic boundary conditions) is block-circulant, with circulant blocks. They deduce that the eigenvalues and the eigenvectors of the precision matrix can be computed using the discrete Fourier Transform (DFT), and therefore without requiring a matrix diagonalization. They then sample from their Gaussian Markov random field using \eqref{L_expr}, where, following the notations of this section, $P(X )=X$, $\bm D$ is the identity matrix, and so $\bm S = \bm Q$. They just replace the product between the $\bm V$ (resp. $\bm V^{-1}$) and a vector by the DFT (resp. inverse DFT) of this vector.

%%%%%%%%%%%%%%%%%%%%%%%%%%%%%%%%%%%%%%%%%%%%%%%%%%%%%%%%%%

\section{Polynomial approximation}
\label{sec:pol}

In the general case, the eigendecomposition of $\bm S$ is inevitable if \eqref{L_expr} is used to simulate the GRV. To avoid this expensive operation, the idea is rather to compute a matrix polynomial approximation $\mathrm{P}_{\mbox{\scriptsize -1/2}}(\bm S):=\sum_{k=0}^{K}\alpha_k \bm S^k$ of $\mathrm{P}(\bm S)^{-1/2}$. Indeed, both matrices can be decomposed in the same eigenbasis $\bm V$ as: 
\begin{equation*}
\mathrm{P}_{\mbox{\scriptsize -1/2}}(\bm S)=\bm V \left(\begin{smallmatrix} \mathrm{P}_{\mbox{\tiny -1/2}}(\lambda_1) & & \\  & \ddots &  \\  &  &  \mathrm{P}_{\mbox{\tiny -1/2}}(\lambda_n) \end{smallmatrix} \right) \bm V^T \quad \text{and}\quad \mathrm{P}(\bm S)^{-1/2}:=\bm V \left(\begin{smallmatrix} 1/\sqrt{\mathrm{P}(\lambda_1)} & & \\  & \ddots &  \\  &  &  1/\sqrt{\mathrm{P}(\lambda_n)} \end{smallmatrix} \right) \bm V^T
\end{equation*}
Consequently, to approximate $ \mathrm{P}(\bm S)^{-1/2}$ by $\mathrm{P}_{\mbox{\scriptsize -1/2}}(\bm S)$, the polynomial $\mathrm{P}_{\mbox{\scriptsize -1/2}}$ must satisfy: 
\begin{equation}
\forall i\in [\![1,n]\!], \quad \mathrm{P}_{\mbox{\scriptsize -1/2}}(\lambda_i) \approx 1/\sqrt{\mathrm{P}(\lambda_i)}
\label{approx_pol}
\end{equation} 
In that case, using  \eqref{L_expr}, a field $\bm z$ with precision matrix approximately equal to $\bm Q$ can be simulated via the formula:
\begin{equation}
\bm z=\bm D^{-1} \mathrm{P}_{\mbox{\scriptsize -1/2}}(\bm S) \bm\varepsilon = \bm D^{-1}\sum_{k=0}^{K}\alpha_k \bm S^k \bm\varepsilon
\label{sim_pol}
\end{equation}
Once the $\lbrace \alpha_k\rbrace$ are known, computing the simulated field using \eqref{sim_pol} can be done using an iterative algorithm that only requires matrix-vector products involving the sparse matrix $\bm S$. 

To define the expression of a polynomial satisfying \eqref{approx_pol}, it is sufficient to solve the following problem: \textit{given an interval $[a, b]$ containing all the eigenvalues of $\bm S$, find a polynomial $\mathrm{P}_{\mbox{\scriptsize -1/2}}$ that approximates the (continuous) function $x\mapsto 1/\sqrt{\mathrm{P}(x)}$ over $[a, b]$}. 
Such an interval $[a, b]$ can be obtained \textit{without having to diagonalize $\bm S$}. Examples of such intervals are provided in Appendix \ref{appen:eig_bound}. Using these results and the fact that $\bm S$ is positive semi-definite, the following interval is considered in the applications presented in this paper: 
\begin{equation}
[a, b]=\big[0, \max\limits_{i\in [\![1,n]\!]} {\sum\limits_{j\in [\![1,n]\!]} |S_{ij}|}\big]
\label{a_b_expr}
\end{equation}

The approximation of the function $1/\sqrt{\mathrm{P}}$ over the interval $[a, b]$ is carried out using Chebyshev polynomials \cite{mason2002chebyshev,press2007numerical}. This family  $(T_k)_{k\in\mathbb{N}}$ of polynomials is the sequence of polynomials defined over $[-1,1]$ by:
$$\forall \theta\in\mathbb{R}, \quad T_k(\cos\theta)=\cos(k\theta)$$
or equivalently via the recurrence relation:
\begin{equation*}
T_0(x) =1, \quad
T_1(x) =x, \quad
T_{k+1}(x) =2xT_k(x)-T_{k-1}(x) \quad (k\geq 1)
\end{equation*}
A graphical representation of these polynomials is provided in Figure \ref{chebpolfig}. Notice that they can be  generalized to arbitrary intervals $[a, b]\subset\mathbb{R}$ via a simple change of variable:
\begin{equation}
y \in [a, b] \mapsto x=\frac{2y-b-a}{b-a} \in [-1, 1]
\label{changeVar}
\end{equation}

\begin{figure}[t]
\centering
\includegraphics[scale=0.6]{./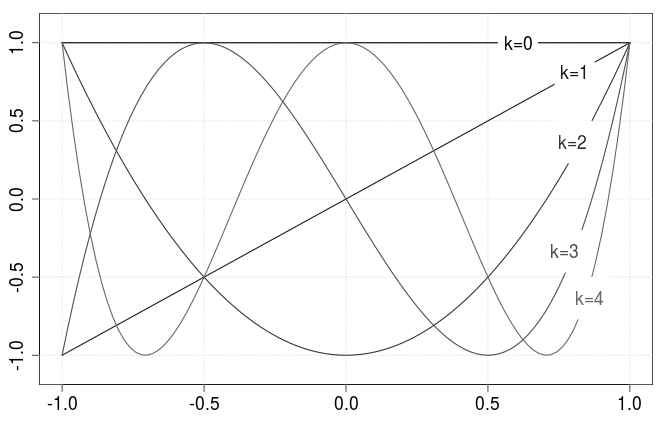}
\vspace{-2ex}
\caption{First 5 Chebyshev polynomials over $[-1, 1]$.}
\label{chebpolfig}
\end{figure}

The choice of Chebyshev polynomials has several perks. For the sake of simplicity, they are listed in the case where a function $f$ defined on $[-1,1]$ has to be approximated. 
\begin{itemize}
\item \textit{Convergence} \cite{mason2002chebyshev}: If $f$ is Lipschitz-continuous over $[-1, 1]$ (which is the case in the applications presented in this article), its Chebyshev series, defined as:
\begin{equation}
\forall x\in[-1, 1], \quad S(x)=\frac{1}{2}c_0T_0\left(x\right) +\sum\limits_{k=1}^{\infty} c_k T_k\left(x\right)
\end{equation}
where $\forall k\in\mathbb{N}$: 
\begin{equation}
\label{coefCheb}
c_k = \frac{2}{\pi}\int_{-1}^{1} f(x)T_k(x)\frac{1}{\sqrt{1-x^2}}dx
\end{equation}
is uniformly convergent on $[-1, 1]$. A similar result can be obtained for continuous functions when the Cesàro sums of their Chebyshev series are considered instead. 

\item \textit{Near minimax property} \cite{mason2002chebyshev,press2007numerical}: Suppose that $f$ is continuous. The minimax polynomial of degree $K$ of $f$ is the polynomial $p^*_K$ defined as:
$$p^*_K=\mathop{\text{argmin}}\limits_{p\in\mathscr{P}_K}\Vert f - p \Vert_\infty$$
where $\mathscr{P}_K$ is the set of all polynomials of degree $\le K$ and $\Vert . \Vert_\infty$ denotes the uniform norm (over $[-1, 1]$). It is generally very difficult to compute. However, the truncated Chebyshev series of degree $K$, denoted $S_K$, is a good approximation of $p^*_K$ in the sense that $\Vert f - S_K \Vert_\infty$ is close to $\Vert f - p^*_K \Vert_\infty$.

\item \textit{Fast computation} \cite{press2007numerical}: A change of variable in \eqref{coefCheb} gives:
\begin{equation}
c_k =\frac{2}{\pi}\int_{0}^{\pi} f\left(\cos\theta\right)\cos(k\theta)d\theta \approx \sum\limits_{j=0}^J f(\cos(j\frac{\pi}{J}))\cos(k j\frac{\pi}{J})
\label{dft}
\end{equation}
This last sum is the expression of the real part of the discrete Fourier Transform of the vector $\left(f(1),\dots,f(\cos(j\frac{\pi}{J})),\dots,f(-1)\right)^T$ for discretization order $J\in\mathbb{N}^*$. Hence the coefficients of the Chebyshev series of a function can be numerically computed using the Fast Fourier Transform algorithm, known for its speed and accuracy \cite{brigham1988fast}. 
\end{itemize}

%%%%%%%%%%%%%%%%%%%%%%%%%%%%%%%%%%%%%%%%%%%%%%%%%%%%%%%%%%

\section{Simulation algorithm}
\label{sec:alg}

In this section, the workflow of the simulation algorithm is presented, then its complexity and the induced error are derived. Finally, criteria on the choice of the approximation order are given.

\begin{figure}[H]
\begin{topbot}[innertopmargin=2ex,innerbottommargin=1ex]
\vspace{-1ex}
\textbf{Workflow}: Simulation of a random vector using Chebyshev approximation

\vspace{-2ex}
\noindent\makebox[\linewidth]{\rule{\textwidth}{0.5pt}}\\
\textbf{Require}:  A positive polynomial $\mathrm{P}$, a real symmetric positive semi-definite $n\times n$ matrix $\bm S$ and an invertible diagonal matrix $\bm D$ of size $n$. An order of approximation $K\in\mathbb{N}$. A vector of $n$ independent standard Gaussian components $\bm\varepsilon$.\\
\textbf{Output}: A vector $\bm z$ with precision matrix (approximately equal to) $\bm{Q}= \bm{D}\mathrm{P}(\bm S) \bm{D}$
\vspace{-1ex}
\begin{enumerate}
\item Find an interval $[a, b]$ containing all the eigenvalues of $\bm S$ (for instance \eqref{a_b_expr}).
\item Compute a polynomial approximation $\mathrm{P}_{\mbox{\scriptsize -1/2}}$ of the function $x \mapsto {1}/{\sqrt{\mathrm{P}(x)}}$ over $[a, b]$, by truncating its shifted Chebyshev series at order $K$:
\begin{equation*}
\mathrm{P}_{\mbox{\scriptsize -1/2}}(x)=\frac{1}{2}c_0T^{[a, b]}_0(x) +\sum\limits_{k=1}^{K} c_k T^{[a, b]}_k(x),\quad T^{[a, b]}_k(x) :=T_k\left(\frac{2}{b-a}x-\frac{b+a}{b-a}\right)
\end{equation*}
The coefficients $(c_k)_{k\in[\![0,K]\!]}$ are computed by Fast Fourier Transform (using the changes of variables \eqref{changeVar} in \eqref{dft}).
\item Compute the product $\bm u = \mathrm{P}_{\mbox{\scriptsize -1/2}}(\bm S)\bm\varepsilon$ using the recurrence relation satisfied by the Chebyshev polynomials.\\
\vspace{-3ex}
%\begin{align*}
%& T^{[a, b]}_1(\bm S) :=\frac{2}{b-a}\bm S -\frac{b+a}{b-a}\bm I; & \quad k=0;\\
%& \bm u_0 = \bm\varepsilon;  \quad \bm u = \frac{1}{2}c_0\bm u_0; &\quad k\leftarrow k+1; \\
%& \bm u_1 =T^{[a, b]}_1(\bm S)\bm u_0; \quad \bm u \leftarrow \bm u+ c_1\bm u_1; &\quad k\leftarrow k+1; \\
%&{\small \textbf{While}}(k \le K)\lbrace \\
%& \quad\quad \bm u_{k} =2T^{[a, b]}_1(\bm S)\bm u_{k-1}-\bm u_{k-2}; \quad 
%\bm u \leftarrow \bm u + c_{k}\bm u_{k}; &\quad k\leftarrow k+1; \\
%& \rbrace \\
%&{\small \textbf{Return }} \bm u
%\end{align*}
\begin{align*}
& \alpha :=\frac{2}{b-a}; \quad  \beta:=\frac{b+a}{b-a}; & \quad k=0;\\
& \bm u^{(-2)} = \bm\varepsilon;  \quad \bm u = \frac{1}{2}c_0\bm u^{(-2)}; &\quad k\leftarrow k+1; \\
& \bm u^{(-1)} =\alpha\bm S\bm\varepsilon-\beta\bm\varepsilon; \quad \bm u \leftarrow \bm u+ c_1\bm u^{(-1)}; &\quad k\leftarrow k+1; \\
&{\small \textbf{While}}(k \le K)\lbrace \\
& \quad\quad \bm u^{(0)} =\alpha\bm S\bm u^{(-1)}-\beta\bm u^{(-1)}-\bm u^{(-2)}; \quad 
\bm u \leftarrow \bm u + c_{k}\bm u^{(0)}; \\
& \quad\quad \bm u^{(-2)}\leftarrow \bm u^{(-1)}; \quad \bm u^{(-1)}\leftarrow\bm u^{(0)};  &\quad k\leftarrow k+1; \\
& \rbrace \\
&{\small \textbf{Return }} \bm u
\end{align*}
\item The simulated field is given by: $\bm z = \bm D^{-1} \bm u$
\end{enumerate}
\end{topbot}
\end{figure}
\vspace{-20pt}

\subsection{Complexity of the algorithm}

The complexity of the simulation algorithm can be explicitly calculated. Denote $n_{nz}$ the number of non-zero entries of $\bm S$ and  $m_{nz}$ the mean number of non-zero entries of a row of $\bm S$: $n_{nz}=m_{nz}\times n$. \\
Denote $K$ the order of the Chebyshev approximation. 
The cost associated with each step (ignoring additions and multiplications by non-stored zeros) is described as follows:
\begin{itemize}
\item Step 1 requires $\mathcal{O}(n_{nz})$ operations using \eqref{a_b_expr} to compute the interval $[a, b]$.
\item Step 2 requires to compute the Fast Fourier Transform of a vector of length $K$. The cost of this operation is $\mathcal{O}(K\log K)$. 
\item Step 3 requires to:
\begin{itemize}
\item compute $K$ products of matrix $\bm S$ and vectors $\rightarrow$ $\mathcal{O}(Kn_{nz})$ operations
\item  $2(K-1)+1$ subtractions of vectors, $3(K-1)+1$ multiplications of a vector by a scalar value and $(K-1)$ additions of vectors $\rightarrow$ $(K-1)n$ operations.
\end{itemize}
\item Step 4 requires $n$ operations (product of a diagonal matrix and a vector).
\end{itemize} 
Therefore, the overall cost of the simulation algorithm is $\mathcal{O}(K n_{nz})=\mathcal{O}(Km_{nz}n)$ operations.\\
And regarding the storage needs, aside from $\bm S$, $\bm D$ and $\bm \varepsilon$ which are assumed to be known (and therefore stored), the algorithm only needs to store $4$ additional vectors of size $n$ ($\bm u$, $\bm u^{(0)}$, $\bm u^{(-1)}$ and $\bm u^{(-2)}$).

%%%%%%%%%%%%%%%%%%%%%%%%%%%%%%%%%%%%%%%%%%%%%%%%%%%%%%%%%%

\subsection{Error quantification}
\label{sec:err}

The main idea of the simulation algorithm presented in this article is to replace the matrix $\bm D^{-1}\mathrm{P}(\bm S)^{-1/2}$ in relation \eqref{L_expr} by an efficient polynomial approximation, namely the matrix $\bm D^{-1}\mathrm{P}_{\mbox{\scriptsize -1/2}}(\bm S)$. The approximation error $\epsilon_{\text{approx}}$ between these matrices can be measured by :
\begin{equation*}
\epsilon_{\text{approx}}:=\Vert\bm D^{-1}\mathrm{P}(\bm S)^{-1/2}-\bm D^{-1}\mathrm{P}_{\mbox{\scriptsize -1/2}}(\bm S)\Vert_{\infty}
\end{equation*}
where $\Vert . \Vert_{\infty}$ denotes the matrix max norm, defined by $\Vert\bm A\Vert_{\infty}:=\max\limits_{i,j} |A_{ij}|$. $\bm D$ being a diagonal matrix, 
\begin{equation*}
\epsilon_{\text{approx}}\leq \Vert\bm D^{-1}\Vert_{\infty}\Vert\mathrm{P}(\bm S)^{-1/2}-\mathrm{P}_{\mbox{\scriptsize -1/2}}(\bm S)\Vert_{\infty}\leq \Vert\bm D^{-1}\Vert_{\infty}\Vert\mathrm{P}(\bm S)^{-1/2}-\mathrm{P}_{\mbox{\scriptsize -1/2}}(\bm S)\Vert_{2}
\end{equation*}
where $\Vert . \Vert_{2}$ denotes the Froebenius norm, defined by $\Vert\bm A\Vert_{2}:=\sqrt{\text{Trace}(\bm A \bm A^T)}$. Therefore,
\begin{equation}
\epsilon_{\text{approx}}\leq \Vert\bm D^{-1}\Vert_{\infty}\sum\limits_{i=1}^n \left(\frac{1}{\sqrt{\mathrm{P}(\lambda_i)}}- \mathrm{P}_{\mbox{\scriptsize -1/2}}(\lambda_i) \right)^2 \leq n \Vert\bm D^{-1}\Vert_{\infty} \max_{x\in[a, b]} \left(\frac{1}{\sqrt{\mathrm{P}(x)}}- \mathrm{P}_{\mbox{\scriptsize -1/2}}(x) \right)^2
\label{err_approx}
\end{equation}
 
Hence, the approximation error on the matrices is upper-bounded by the overall error that arises from the polynomial approximation of $x \mapsto {1}/{\sqrt{\mathrm{P}(x)}}$ by its Chebyshev series. This last error can be made arbitrary small by truncating the polynomial series at a growing order. Therefore, the simulation algorithm is asymptotically exact given that asymptotically, the matrices $\bm D^{-1}\mathrm{P}(\bm S)^{-1/2}$ and $\bm D^{-1}\mathrm{P}_{\mbox{\scriptsize -1/2}}(\bm S)$ coincide.

%%%%%%%%%%%%%%%%%%%%%%%%%%%%%%%%%%%%%%%%%%%%%%%%%%%%%%%%%%

\subsection{Determination of the approximation precision}
\label{sec:test}

A practical question still needs to be answered : how to choose the order of the approximation polynomial? A criterion could be based on relation \eqref{err_approx}, by imposing an order high enough so that the approximation error $\epsilon_{\text{approx}}$ of the matrices is below a given tolerance. But then, this tolerance would also need to be chosen.

Given that the goal is actually to generate a random vector with the right covariance properties, the statistical properties of the random vectors produced by the proposed simulation algorithm are now studied. Criteria on the approximation error of the function $1/\sqrt{\mathrm{P}}$ (and therefore on the order of the polynomial $\mathrm{P}_{\mbox{\scriptsize -1/2}}$) are deduced to create vectors with statistical properties "close enough" to the targeted ones.

\subsubsection{Assessment of simulation validity through statistical tests}
\label{subsec:stat_test}
The aim is to simulate a Gaussian vector $\bm z$ with covariance matrix: 
$$\bm{\Sigma}=\bm{Q}^{-1}=\bm{D}^{-1}\mathrm{P}(\bm S)^{-1} \bm{D}^{-1}$$
But instead, the proposed simulation algorithm actually generates a Gaussian vector $\bm z_s$ with covariance matrix: $$\bm{\Sigma}_s=\bm{D}^{-1}\mathrm{P}_{\mbox{\scriptsize -1/2}}(\bm S)^2 \bm{D}^{-1}$$
Consider then a sample of $N$ independent zero-mean Gaussian vectors $\left(\bm z_s^{(1)}, \dots, \bm z_s^{(N)}\right)$ with covariance matrix $\bm{\Sigma}_s$. Let's consider the following  null hypothesis test:
\begin{center}
$H_0$ : $\left(\bm z_s^{(1)}, \dots, \bm z_s^{(N)}\right)$ is a sample of zero-mean Gaussian vectors with covariance matrix $\bm{\Sigma}$
\end{center} 

Obviously, the condition on the mean is satisfied by construction of this sample. Besides, by definition \cite{tong2012multivariate}, a random vector $\bm z$ is a Gaussian vector with covariance matrix $\bm\Sigma$ if and only if, for any $\bm v \in \mathbb{R}^n$, $\tr{\bm v}\bm z$ is a Gaussian variable with variance $\tr{\bm v}\bm{\Sigma}\bm v$. Therefore, hypothesis $H_0$ won't be rejected if $\forall\bm v\in\mathbb{R}^n$, the hypothesis $H_0^v$ defined by:
\begin{center}
$H_0^v$ : $\left(\tr{\bm v}\bm z_s^{(1)}, \dots, \tr{\bm v}\bm z_s^{(N)}\right)$ is a sample of zero-mean Gaussian variables with variance $\tr{\bm v}\bm{\Sigma}\tr{\bm v}$
\end{center}
is not rejected.

Two-sided chi-square tests for the variance \cite{snedecor1989} are considered. The results of these tests can actually be anticipated given that by definition, the sample $\left(\tr{\bm v}\bm z_s^{(1)}, \dots, \tr{\bm v}\bm z_s^{(N)}\right)$ has a known distribution: it is Gaussian with variance $\tr{\bm v} \bm{\Sigma}_s \bm v$. In particular, a criterion on the quality of the polynomial approximation such that \textit{for any} $\bm v \in \mathbb{R}^n$ the probability of rejecting hypothesis $H_0^v$ can be controlled is derived.

\begin{prop}
Let $[a, b]$ be an interval containing all the eigenvalues of $\bm S$. Then $\forall \gamma > 0$, there exists $\epsilon_{N,\gamma}>0$ such that:
\begin{equation}
\max\limits_{\lambda \in [a, b]} \left\vert \frac{1/\mathrm{P}(\lambda)-\mathrm{P}_{\mbox{\scriptsize -1/2}}(\lambda)^2}{\mathrm{P}_{\mbox{\scriptsize -1/2}}(\lambda)^2} \right| \leq \epsilon_{N,\gamma} \Rightarrow \forall \bm v\in\mathbb{R}^n, \quad R_\alpha(\bm v) \leq (1+\gamma)\alpha
\label{cond}
\end{equation}
where $R_\alpha(\bm v)$ is the probability of rejecting hypothesis $H_0^v$ in a chi-square test for the variance with significance $\alpha$.
\label{prop_approx_prec}
\end{prop}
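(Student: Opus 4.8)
The plan is to reduce the statement to a one-dimensional fact about the ratio between the true variance of $\tr{\bm v}\bm z_s$ and the hypothesised one, and then to turn the approximation bound in \eqref{cond} into a uniform control of that ratio. So first I would fix $\bm v\neq\bm 0$ (the case $\bm v=\bm 0$ being trivial, since then $H_0^v$ holds exactly) and use that, by construction, the $\tr{\bm v}\bm z_s^{(j)}$ are i.i.d.\ centred Gaussian with the \emph{known} variance $\sigma_s^2(\bm v):=\tr{\bm v}\bm\Sigma_s\bm v$, whereas the test uses the hypothesised variance $\sigma^2(\bm v):=\tr{\bm v}\bm\Sigma\bm v$. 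Hence the two-sided chi-square statistic $T=\sum_{j=1}^N(\tr{\bm v}\bm z_s^{(j)})^2/\sigma^2(\bm v)$ has the law of $\rho(\bm v)\,Y$ with $\rho(\bm v):=\sigma_s^2(\bm v)/\sigma^2(\bm v)$ and $Y\sim\chi^2_N$ (there are $N$ degrees of freedom because the mean is known to be zero). Writing $F$ for the c.d.f.\ of $\chi^2_N$ and $q_{\alpha/2},q_{1-\alpha/2}$ for the two critical quantiles, the rejection probability becomes
\begin{equation*}
R_\alpha(\bm v)=g\big(\rho(\bm v)\big),\qquad g(\rho):=F\!\left(\frac{q_{\alpha/2}}{\rho}\right)+1-F\!\left(\frac{q_{1-\alpha/2}}{\rho}\right).
\end{equation*}
The map $g$ is continuous on $(0,\infty)$ with $g(1)=\tfrac{\alpha}{2}+1-(1-\tfrac{\alpha}{2})=\alpha$, so given $\gamma>0$ there is $\delta_{N,\gamma}>0$ — depending on $N$, $\alpha$, $\gamma$ but \emph{not} on $\bm v$ — such that $|\rho-1|\le\delta_{N,\gamma}$ implies $g(\rho)\le(1+\gamma)\alpha$.

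Next I would bound $|\rho(\bm v)-1|$ uniformly in $\bm v$ by the quantity on the left-hand side of \eqref{cond}. With $\bm w:=\bm D^{-1}\bm v$ and $\bm\Sigma_s=\bm D^{-1}\mathrm{P}_{\mbox{\scriptsize -1/2}}(\bm S)^2\bm D^{-1}$, $\bm\Sigma=\bm D^{-1}\mathrm{P}(\bm S)^{-1}\bm D^{-1}$, one has $\sigma_s^2(\bm v)=\tr{\bm w}\mathrm{P}_{\mbox{\scriptsize -1/2}}(\bm S)^2\bm w$ and $\sigma^2(\bm v)=\tr{\bm w}\mathrm{P}(\bm S)^{-1}\bm w$, and $\bm w$ ranges over $\mathbb{R}^n\setminus\{\bm 0\}$ as $\bm v$ does. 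Diagonalising $\bm S=\bm V\bm\Lambda\bm V^T$ with $\bm\Lambda=\diag(\lambda_1,\dots,\lambda_n)$ and setting $\tilde{\bm w}:=\bm V^T\bm w$,
\begin{equation*}
\rho(\bm v)=\frac{\sum_{i=1}^n\mathrm{P}_{\mbox{\scriptsize -1/2}}(\lambda_i)^2\,\tilde w_i^2}{\sum_{i=1}^n\mathrm{P}(\lambda_i)^{-1}\,\tilde w_i^2}=\frac{\sum_i r_i\mu_i}{\sum_i\mu_i},\qquad r_i:=\mathrm{P}_{\mbox{\scriptsize -1/2}}(\lambda_i)^2\mathrm{P}(\lambda_i),\quad\mu_i:=\frac{\tilde w_i^2}{\mathrm{P}(\lambda_i)}\ge0.
\end{equation*}
Since $\mathrm{P}>0$ on $\mathbb{R}_+$ and $\lambda_i\ge0$, the weights $\mu_i$ are nonnegative and not all zero, so $\rho(\bm v)$ is a convex combination of $r_1,\dots,r_n$ and thus lies in $[\min_i r_i,\max_i r_i]$. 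The premise of \eqref{cond}, $\max_{\lambda\in[a,b]}\bigl|(1/\mathrm{P}(\lambda)-\mathrm{P}_{\mbox{\scriptsize -1/2}}(\lambda)^2)/\mathrm{P}_{\mbox{\scriptsize -1/2}}(\lambda)^2\bigr|\le\epsilon$, rearranges (for $\epsilon<1$) to $1/(1+\epsilon)\le r_i\le1/(1-\epsilon)$ for every $i$, hence $|\rho(\bm v)-1|\le\epsilon/(1-\epsilon)$, uniformly in $\bm v$.

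It then suffices to set $\epsilon_{N,\gamma}:=\delta_{N,\gamma}/(1+\delta_{N,\gamma})\in(0,1)$: the hypothesis of \eqref{cond} forces $|\rho(\bm v)-1|\le\epsilon_{N,\gamma}/(1-\epsilon_{N,\gamma})=\delta_{N,\gamma}$ for all $\bm v$, and therefore $R_\alpha(\bm v)=g(\rho(\bm v))\le(1+\gamma)\alpha$, which is exactly \eqref{cond}. The main obstacle is the first step: one has to notice that the rejection probability depends on $\bm v$ only through the scalar $\rho(\bm v)$, is continuous in it, and equals $\alpha$ at $\rho=1$ — this is precisely what lets the tolerance $\delta_{N,\gamma}$, and hence $\epsilon_{N,\gamma}$, be chosen independently of the test direction $\bm v$; the remainder is the elementary convex-combination estimate. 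Two minor points to keep in mind are that $\mathrm{P}_{\mbox{\scriptsize -1/2}}(\lambda_i)\neq0$ is implicit in \eqref{cond}, and that the chi-square test carries $N$ degrees of freedom because the sample mean need not be estimated.
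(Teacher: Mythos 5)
Your proposal is correct and follows essentially the same route as the paper: reduce $R_\alpha(\bm v)$ to a function of the single variance ratio, observe it equals $\alpha$ when the ratio is $1$ and use continuity to get a $\bm v$-independent tolerance, then bound the ratio's deviation uniformly in $\bm v$ by $\max_{\lambda\in[a,b]}\bigl\vert(1/\mathrm{P}(\lambda)-\mathrm{P}_{\mbox{\scriptsize -1/2}}(\lambda)^2)/\mathrm{P}_{\mbox{\scriptsize -1/2}}(\lambda)^2\bigr\vert$ (your diagonalization/convex-combination step is exactly the paper's Rayleigh-quotient argument). The only deviations are cosmetic: the paper uses the sample-variance test with estimated mean, hence $\chi^2(N-1)$ (which is what its tabulated $\epsilon_{N,\gamma}$ refer to), and controls $X=\tr{\bm v}\bm{\Sigma}\bm v/\tr{\bm v}\bm{\Sigma}_s\bm v$ rather than its reciprocal, so it needs no $\epsilon/(1-\epsilon)$ adjustment — neither point affects the validity of your argument for the existence claim.
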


\begin{proof}
See Appendix \ref{appen:proof}.
\end{proof}
Therefore, if \eqref{cond} is satisfied, then, for any $\bm v$, hypothesis $H_0^v$ is actually rejected (with significance $\alpha$) with a probability less than $(1+\gamma)\alpha$. This probability would have been equal to $\alpha$ if the samples were generated using the right covariance matrix. Therefore, the parameter $\gamma$ represents relative increase of the rejection probability due to the fact that the samples are generated using $\bm\Sigma_s$ instead of $\bm\Sigma$.

Tables \ref{tabPrec1} and \ref{tabPrec2} give typical values of the tolerance $\epsilon_{N,\gamma}$ for various sample sizes $N$ and thresholds $\gamma$. The significance is fixed at $\alpha = 0.05$ for table \ref{tabPrec1} and $\alpha = 0.01$ for table \ref{tabPrec2}. 

\begin{table}[h]
\centering
\begin{tabular}{|c||cccccc|}
\hline
\multirow{2}*{$\gamma$} & \multicolumn{6}{c|}{Sample size $N$} \\\cline{2-7}
 & 50 & 100 & 500 & 1000 & 5000 & 10000 \\ 
  \hline
0.1\% & \cellcolor{col1}6.40e-04 & \cellcolor{col1}6.20e-04 & \cellcolor{col1}5.40e-04 & \cellcolor{col1}4.80e-04 & \cellcolor{col1}3.00e-04 & \cellcolor{col1}2.40e-04 \\ 
  1\% & \cellcolor{col4}5.44e-03 & \cellcolor{col3}4.80e-03 & \cellcolor{col3}3.04e-03 & \cellcolor{col2}2.36e-03 & \cellcolor{col2}1.20e-03 & \cellcolor{col2}8.60e-04 \\ 
  5\% & \cellcolor{col6}1.89e-02 & \cellcolor{col6}1.51e-02 & \cellcolor{col4}8.06e-03 & \cellcolor{col4}5.94e-03 & \cellcolor{col2}2.82e-03 & \cellcolor{col2}2.02e-03 \\ 
  10\% & \cellcolor{col7}3.00e-02 & \cellcolor{col7}2.33e-02 & \cellcolor{col5}1.18e-02 & \cellcolor{col5}8.64e-03 & \cellcolor{col3}4.02e-03 & \cellcolor{col3}2.88e-03 \\ 
  20\% & \cellcolor{col8}4.59e-02 & \cellcolor{col7}3.48e-02 & \cellcolor{col6}1.71e-02 & \cellcolor{col5}1.24e-02 & \cellcolor{col4}5.74e-03 & \cellcolor{col3}4.08e-03 \\ 
  50\% & \cellcolor{col8}7.66e-02 & \cellcolor{col8}5.71e-02 & \cellcolor{col7}2.75e-02 & \cellcolor{col6}1.98e-02 & \cellcolor{col5}9.08e-03 & \cellcolor{col4}6.46e-03 \\ 
  100\% & \cellcolor{col8}1.10e-01 & \cellcolor{col8}8.12e-02 & \cellcolor{col8}3.89e-02 & \cellcolor{col7}2.80e-02 & \cellcolor{col6}1.28e-02 & \cellcolor{col5}9.10e-03 \\ 
   \hline
\end{tabular}
\caption{Values of the precision threshold $\epsilon_{N,\gamma}$ for different values of sample size $N$ and of degradation of the type I error $\gamma$. The significance of the test is $\alpha=0.05$}
\label{tabPrec1}

\vspace{4ex}

\begin{tabular}{|c||cccccc|}
\hline
\multirow{2}*{$\gamma$} & \multicolumn{6}{c|}{Sample size $N$} \\\cline{2-7}
 & 50 & 100 & 500 & 1000 & 5000 & 10000 \\ 
  \hline
0.1\% & \cellcolor{col1}4.00e-04 & \cellcolor{col1}4.00e-04 & \cellcolor{col1}3.60e-04 & \cellcolor{col1}3.20e-04 & \cellcolor{col1}2.20e-04 & \cellcolor{col1}1.80e-04 \\ 
  1\% & \cellcolor{col4}3.56e-03 & \cellcolor{col3}3.24e-03 & \cellcolor{col3}2.20e-03 & \cellcolor{col2}1.74e-03 & \cellcolor{col2}9.20e-04 & \cellcolor{col2}6.60e-04 \\ 
  5\% & \cellcolor{col6}1.33e-02 & \cellcolor{col6}1.09e-02 & \cellcolor{col4}6.06e-03 & \cellcolor{col4}4.52e-03 & \cellcolor{col2}2.18e-03 & \cellcolor{col2}1.56e-03 \\ 
  10\% & \cellcolor{col7}2.16e-02 & \cellcolor{col7}1.71e-02 & \cellcolor{col5}9.00e-03 & \cellcolor{col5}6.62e-03 & \cellcolor{col3}3.12e-03 & \cellcolor{col3}2.24e-03 \\ 
  20\% & \cellcolor{col8}3.36e-02 & \cellcolor{col7}2.59e-02 & \cellcolor{col6}1.31e-02 & \cellcolor{col5}9.54e-03 & \cellcolor{col4}4.44e-03 & \cellcolor{col3}3.18e-03 \\ 
  50\% & \cellcolor{col8}5.67e-02 & \cellcolor{col8}4.28e-02 & \cellcolor{col7}2.10e-02 & \cellcolor{col6}1.52e-02 & \cellcolor{col5}7.00e-03 & \cellcolor{col4}5.00e-03 \\ 
  100\% & \cellcolor{col8}8.11e-02 & \cellcolor{col8}6.07e-02 & \cellcolor{col8}2.94e-02 & \cellcolor{col7}2.12e-02 & \cellcolor{col6}9.76e-03 & \cellcolor{col5}6.96e-03 \\ 
   \hline
\end{tabular}
\caption{Values of the precision threshold $\epsilon_{N,\gamma}$ for different values of sample size $N$ and of degradation of the type I error $\gamma$. The significance of the test is $\alpha=0.01$}
\label{tabPrec2}
\end{table}

\subsubsection{Efficiency improvement}
\label{subsec:eff_impr}
The previous subsection provides a link between the order of the polynomial approximation of $1/\sqrt{P}$ and the validity of resulting simulations using the proposed algorithm. In practice, this order can be reduced in some cases.

Following the notations of section \ref{sec:alg}, let $(c_k)_{k\in\mathbb{N}}$ be the coefficients of the Chebyshev series of $1/\sqrt{P}$ over an interval $[a, b]$ containing all the eigenvalues of $\bm S$ and for $m\in\mathbb{N}^*$, let $\bm z^{(m)}$ be the vector defined by:
$$\bm z^{(m)}=\bm D^{-1}\left(\frac{1}{2} c_0 T_0^{[a, b]}(\bm S)\bm\varepsilon+\sum\limits_{k=1}^m c_k T_k^{[a, b]}(\bm S)\bm\varepsilon \right)$$
where $\bm\varepsilon$ is a vector with independent standard Gaussian values.
Then, for $m,l \in \mathbb{N}^*$:
\begin{align*}
\Vert \bm z^{(m+l)}-\bm z^{(m)} \Vert = \left\Vert \bm D^{-1}\sum\limits_{k=1}^l c_{m+k} T_k^{[a, b]}(\bm S)\bm\varepsilon \right\Vert
 \le \Vert \bm D^{-1}\Vert_\infty \sum\limits_{k=1}^l \vert c_{m+k} \vert \left\Vert T_k^{[a, b]}(\bm S)\bm\varepsilon \right\Vert 
\end{align*}
where $\Vert . \Vert$ denotes the Euclidean norm on $\mathbb{R}^n$.\\
Recall that, according to the properties of Rayleigh quotients (see Appendix \ref{appen:ray}), and using the symmetry of $T_k^{[a, b]}(\bm S)$:
$$\frac{\left\Vert T_k^{[a, b]}(\bm S)\bm\varepsilon \right\Vert^2 }{\Vert \bm\varepsilon\Vert^2}=\frac{\tr{\bm\varepsilon}T_k^{[a, b]}(\bm S)^2\bm\varepsilon  }{\tr{\bm\varepsilon}\bm\varepsilon}\leq \lambda_{\text{max}}\left(T_k^{[a, b]}(\bm S)^2 \right)$$
where $\lambda_{\text{max}}(.)$ denotes the largest eigenvalue of a matrix.
Notice then that, given that the Chebyshev polynomials are upper-bounded (in absolute value) by $1$, $\lambda_{\text{max}}\left(T_k^{[a, b]}(\bm S)^2 \right)\le 1$. Consequently,
$$\Vert \bm z^{(m+l)}-\bm z^{(m)} \Vert 
 \le \Vert \bm D^{-1} \Vert_\infty \sum\limits_{k=1}^l \vert c_{m+k} \vert \left\Vert\bm\varepsilon \right\Vert $$
Hence, $\Vert \bm D^{-1} \Vert_\infty\left\Vert\bm\varepsilon \right\Vert \sum\limits_{k=1}^l \vert c_{m+k} \vert \approx 0 \Rightarrow \bm z^{(m+l)}\approx\bm z^{(m)}$. This gives an additional criterion for the choice of the approximation order of the algorithm. After computing a value of order $L$ using the criterion based on statistical tests, its value can be decreased to $K$ as long as: $$\sum\limits_{k=K+1}^L \vert c_{k} \vert \le \frac{\eta}{\Vert\bm D^{-1} \Vert_\infty\Vert\bm\varepsilon\Vert}$$
for a fixed tolerance $\eta$ corresponding to the Euclidean distance between the vector computed using order $K$ and the one computed using order $L$. In particular, if $\eta$ is of the form $\eta=\sqrt{\epsilon n}$ (where $n$ is the size of the simulated vectors) and $\epsilon>0$, then in average, the square  of the components of the vector $(\bm z^{(L)} - \bm z ^{(K)})$  will be less than $\epsilon$.

\section{Application}
\label{sec:app}

\subsection{Simulation of stationary Matérn models}
\label{subec:sim_mat}

The Matérn model is a widely used covariance model in Geostatistics due to its great flexibility. For a lag distance $h\in\mathbb{R}_+$, its isotropic formulation is \cite{chiles1999geost}: 
\begin{equation*}
C(h)= \frac{\sigma^2}{2^{\nu-1}\Gamma(\nu)}\left(\frac{h}{\phi}\right)^{\nu}K_{\nu}\left(\frac{h}{\phi}\right)
\end{equation*}
where $\sigma^2>0$ is the marginal variance, $\phi >0$ is a scaling parameter, $\nu>0$ is a shape parameter and $K_\nu$ is the modified Bessel function of the second kind of order $\nu$. The parameter $\nu$ can be seen as a "smoothness" parameter as the underlying process is $\lfloor \nu\rfloor$-time mean-square differentiable.

In \cite{lindgren2011explicit}, Lindgren et al. define Markovian approximations GRF with Matérn covariance as numerical solutions of SPDE \eqref{spdeMatern} with parameters:
$$\kappa=1/\phi, \quad \alpha=\nu+d/2\in\mathbb{N}, \quad \tau=\sigma\kappa^{\nu}\sqrt{(4\pi)^{d/2}\Gamma(\nu+d/2)/\Gamma(\nu)}$$
using the finite element method with linear triangular elements. They show that the precision matrix of the weights of the finite element representation of the solution can be written as \eqref{Q_gen}, with:
\begin{equation}
\bm S =(1/\kappa^2)\bm C^{-1/2} \bm G \bm C^{-1/2}, \quad \bm D=(\kappa^\alpha/\tau)\bm C^{1/2}, \quad \mathrm{P}:x\mapsto (1+x)^{\alpha}
\label{app_def_mat}
\end{equation}
where:
\begin{equation}
\bm C = [\langle \psi_i, \psi_j \rangle], \quad \bm G = [\langle \nabla\psi_i, \nabla\psi_j \rangle], \quad \bm C^{-1/2}=(\bm C^{1/2})^{-1}
\label{el_C_G}
\end{equation}
and $\bm C^{1/2}$ is a square-root of $\bm C$. The matrices $\bm C$ and $\bm G$ are sparse. Following again \cite{lindgren2011explicit}, $\bm C$ is replaced by a diagonal matrix with entries $[\langle \psi_i, 1 \rangle]$, which yields a Markov approximation of the solution that has the same convergence rate as the full finite element method formulation.

In this subsection, simulations of the Matérn fields are generated using two methods (for comparison purposes): the Cholesky factorisation algorithm presented in Section \ref{subsec:matfac}, and the simulation algorithm proposed in this paper and presented in Section \ref{sec:alg}.

When applied to the simulation of random fields using the finite element method, the proposed algorithm can actually be interpreted as an image convolution algorithm. To see that, notice that when defined as in \eqref{app_def_mat}, the matrix $\bm S$  can be interpreted as what is referred to as a shift operator in the GSP theory \cite{shuman2013emerging}. Namely, when (linear) triangular elements are used, $\bm S$ is a sparse matrix whose entry $M_{ij}$ is non-zero if the nodes $i$ and $j$ of the triangulation are adjacent or if they coincide. Therefore, the non-zeros entries of a polynomial of degree $K$ of $\bm S$ correspond to nodes that are within a distance $K$ from each other in the triangulation (i.e. there exist a chain of at most $K$ adjacent nodes between them). 

Now, according to the workflow provided in Section \ref{sec:alg}, the output vector $\bm z$ of the simulation algorithm can be written as $\bm z=\bm D^{-1} \mathrm{P}_{\mbox{\scriptsize -1/2}}(\bm S)\bm\varepsilon$, where $\bm D$ is a diagonal matrix, $\bm \varepsilon$ is a vector indexed by the nodes of the triangulation and $\mathrm{P}_{\mbox{\scriptsize -1/2}}$ is a polynomial of degree $K$. So, the $i$-th entry of $\bm z$ is given by a linear combination of the entries of $\bm \varepsilon$ that correspond to nodes within a distance $K$ of $i$, and weighted by the entries of $i$-th row of $\bm D^{-1} \mathrm{P}_{\mbox{\scriptsize -1/2}}(\bm S)$. Hence, $\bm z$ can be seen as a convolution of the entries $\bm \varepsilon$. The bigger the degree of the polynomial, the bigger the size of the convolution kernel.

\subsubsection{Order of the polynomial approximation}
First, the effect of the order of the polynomial approximation on the resulting simulation is investigated. To do so, simulations of a Matérn field are generated on a 200x200 grid, with range 25, sill 1 and smoothness parameter 1, and with a growing order. In Figure \ref{fig_stat}, simulations obtained for degree values of 1, 5, 20 and 100 and the associated variogram (averaged over 50 simulations) are displayed. As a comparison, the same model simulated using the classical Cholesky factorisation algorithm is displayed in Figure \ref{fig_chol}.

\begin{figure}[b]
\vspace{-10pt}
     \centering
        \includegraphics[width=0.32\textwidth]{./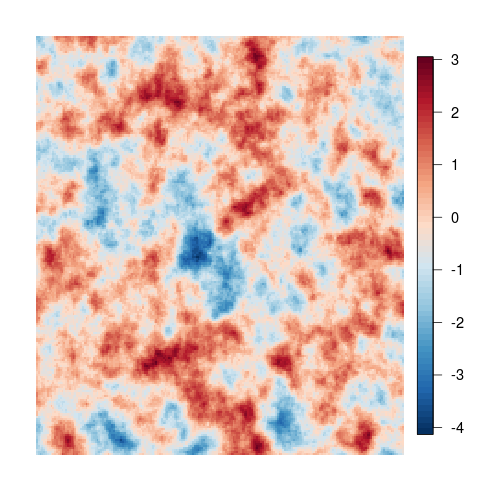}
        \includegraphics[width=0.32\textwidth]{./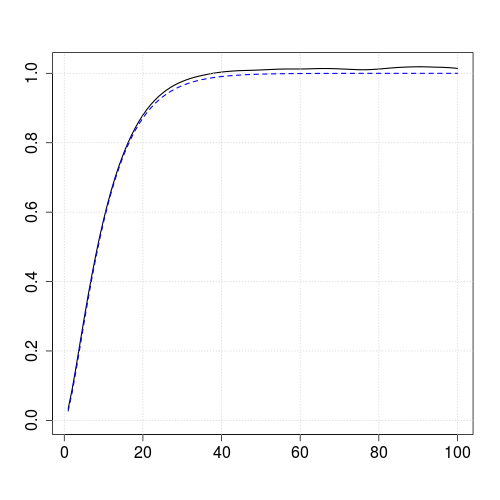}
        \vspace{-20pt}
 \caption{\textit{(Left)} Simulations  of a Matérn field on a 200x200 grid using Cholesky factorisation. \textit{(Right)} Mean variograms  over 50 simulations (solid line) and model (dotted line) : range=25, sill=1, smoothness=1.}
  	\label{fig_chol}
\end{figure}

\begin{figure}
\vspace{-10pt}
     \centering
    \begin{subfigure}[t]{\textwidth}
    	\centering
        \includegraphics[width=0.32\textwidth]{./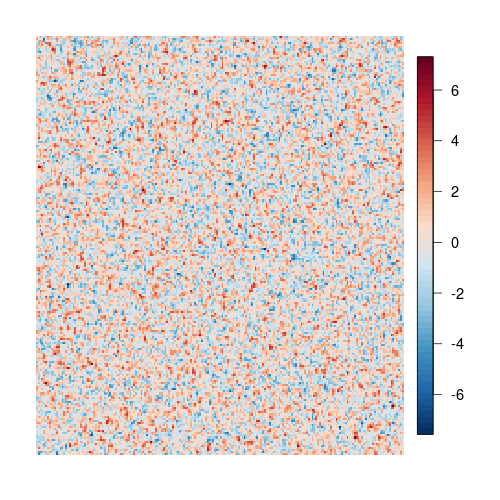}
        \includegraphics[width=0.32\textwidth]{./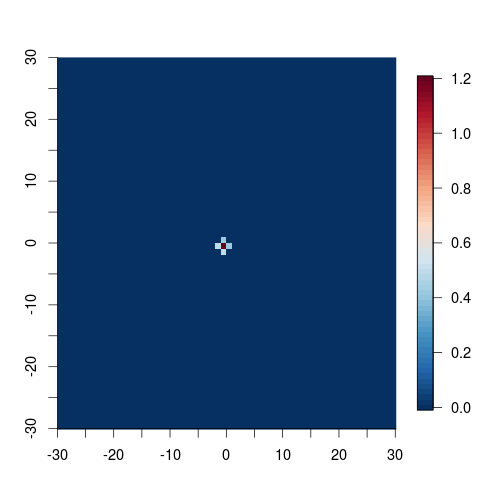}
        \includegraphics[width=0.32\textwidth]{./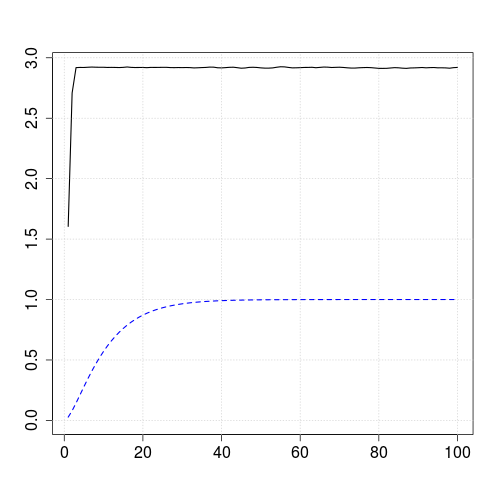}
        \vspace{-15pt}
        \caption{Order=1}
    \end{subfigure}
    \\
    \vspace{-10pt}
    \begin{subfigure}[t]{\textwidth}
    	\centering
        \includegraphics[width=0.32\textwidth]{./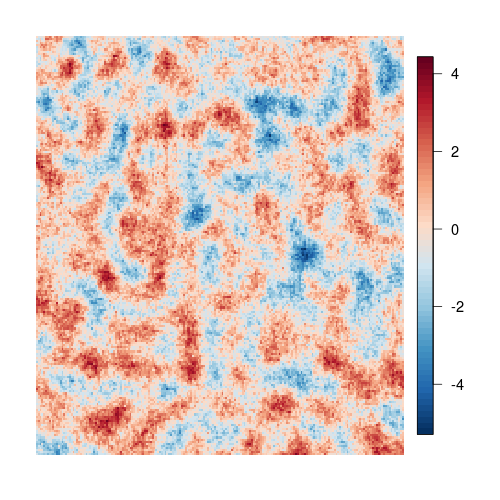}
         \includegraphics[width=0.32\textwidth]{./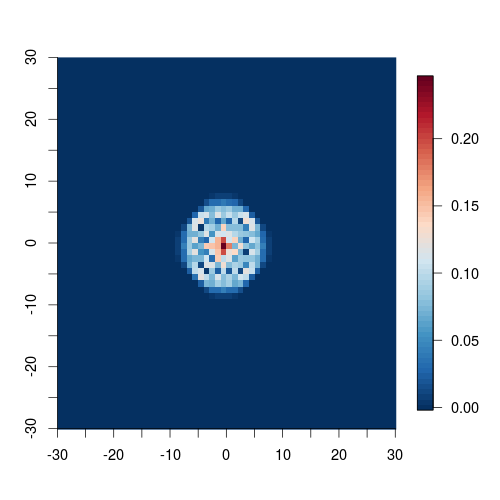}
        \includegraphics[width=0.32\textwidth]{./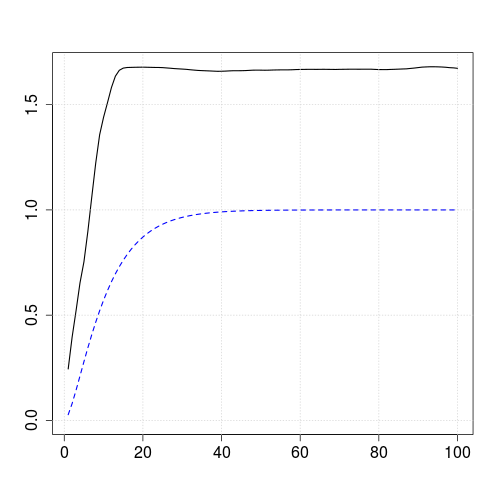}
        \vspace{-15pt}
        \caption{Order=10}
    \end{subfigure}
    \\
    \vspace{-10pt}
    \begin{subfigure}[t]{\textwidth}
    	\centering
        \includegraphics[width=0.32\textwidth]{./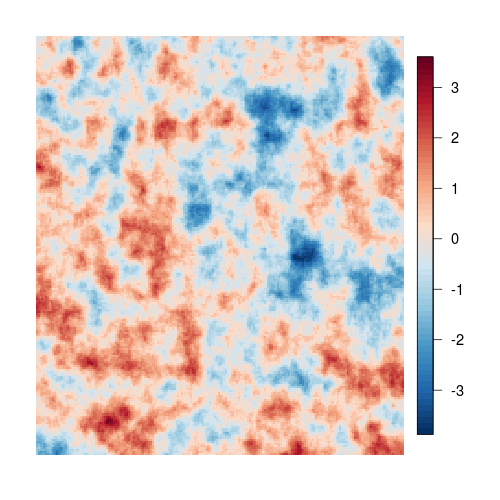}
    	\includegraphics[width=0.32\textwidth]{./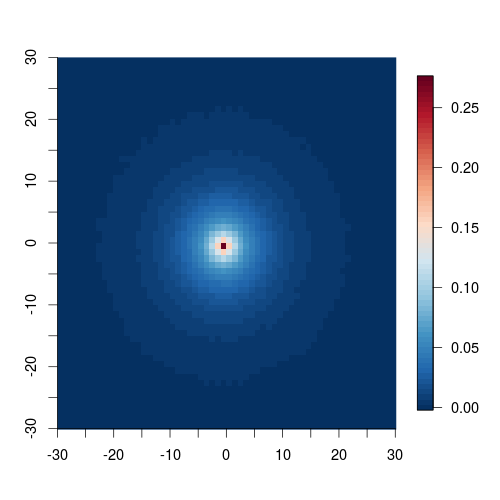}        
        \includegraphics[width=0.32\textwidth]{./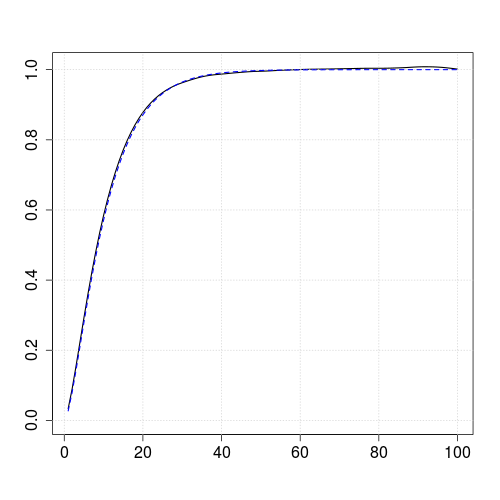}        
        \vspace{-15pt}
        \caption{Order=50}
    \end{subfigure}
        \\
    \vspace{-10pt}
    \begin{subfigure}[t]{\textwidth}
    	\centering
        \includegraphics[width=0.32\textwidth]{./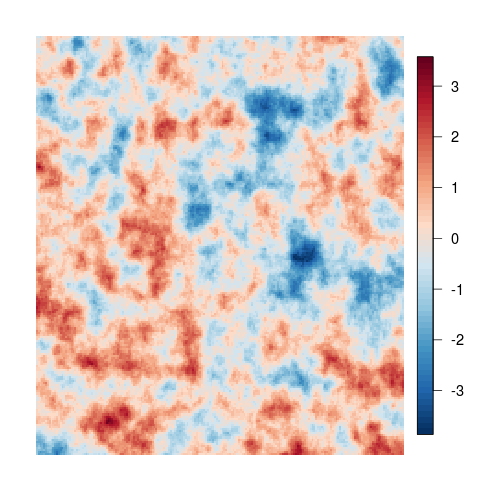}
    	\includegraphics[width=0.32\textwidth]{./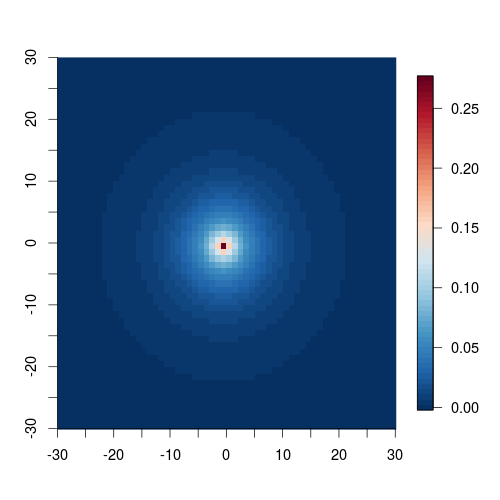}        
        \includegraphics[width=0.32\textwidth]{./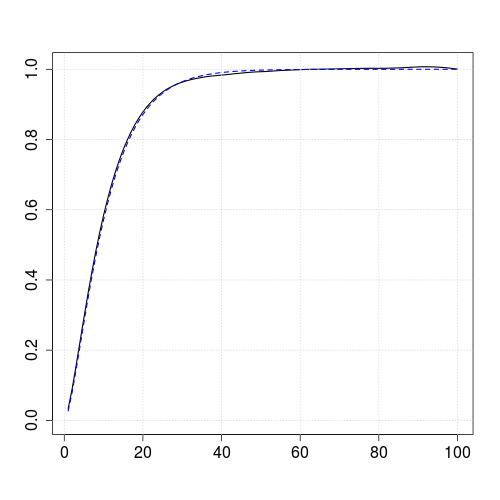}        
        \vspace{-15pt}
        \caption{Order=100}
    \end{subfigure}
    \vspace{-12pt}        
        \caption{\textit{(Left)} Simulations of a Matérn model with range 25, sill 1 and smoothness parameter 1 on a 200x200 grid using Chebyshev approximation with growing order. \textit{(Center)} Convolution kernels associated with the simulation. \textit{(Right)} Mean variograms  over 50 simulations (solid line) and model (dotted line).}
\label{fig_stat}
 \end{figure}
 
As noticed in Figure \ref{fig_stat}, increasing the order of the polynomial tends to add smoothness and structure to the simulation. This is expected from a convolution algorithm as the size of the kernel, which is directly linked to the order of the polynomial, grows (center images in Figure \ref{fig_stat}). Moreover, there seems to be a point from which adding more polynomials doesn't change the simulation. This observation is just a consequence of the result presented in Subsection \ref{subsec:eff_impr}.

Remark also that the variogram of the simulations in Figure \ref{fig_stat} tends to be respected as the order of the polynomial grows. This fact was predictable and is due to the fact that the proposed algorithm ensures that any linear combinations of the vectors generated by the algorithm have the right variance within a given tolerance (see subsection \ref{subsec:stat_test}). Consequently, this will ensure that the variogram  is respected given that its value at particular lag $h$ is just the variance of the difference between two particular entries of the simulated vector that correspond to nodes of the triangulation separated by an Euclidean distance of $h$.

\subsubsection{Influence of the model}

The influence of the covariance model parameters on the resulting approximations is now investigated. To do so, simulations of Matérn fields with different values of range and smoothness parameters are generated (cf. Figure \ref{fig_param_stat}). For each set of parameters, the order of approximation is set so that the probability of rejection on the statistical tests with significance $\alpha=0.05$ is equal to $(1+10\%)\alpha$, which corresponds to a threshold on the approximation error \eqref{cond} of 3.0e-02 (cf. Table \ref{tabPrec1}). Following the result of subsection \ref{subsec:eff_impr}, the effective order of approximation used to generate the simulation is reduced by considering a tolerance $\eta = \sqrt{10^{-4}n}$ where $n=200^2$ is the size of the simulated vector.

The order of approximation used for each simulation is reported in Figure \ref{fig_param_stat}. It can be noticed that increasing the range results in significantly higher orders of approximation to achieve the same accuracy, whereas the effect of the smoothness parameters seems more limited. This is just another consequence of the "convolution" nature of the algorithm, as explained at the beginning of the section. The bigger the range is, the bigger the size of the kernel used to generate the simulation from a white noise image should be as bigger "spots" must be created, and therefore the bigger the order of approximation is. On the other hand, the smoothness parameter mainly affects the smoothness of the kernel, not its size.

\begin{figure}[b!]
\vspace{-10pt}
     \centering
    \begin{subfigure}[t]{\textwidth}
    	\centering
        \includegraphics[width=0.32\textwidth]{./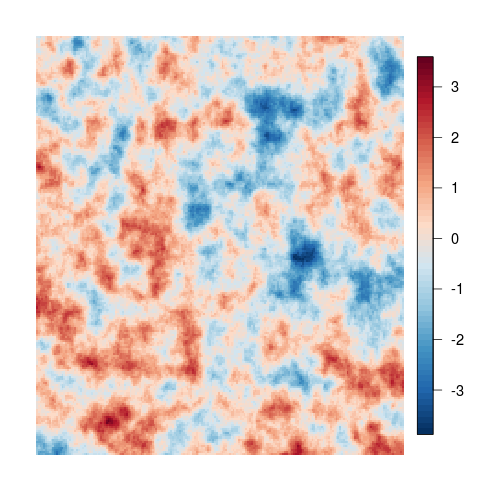}
    	\includegraphics[width=0.32\textwidth]{./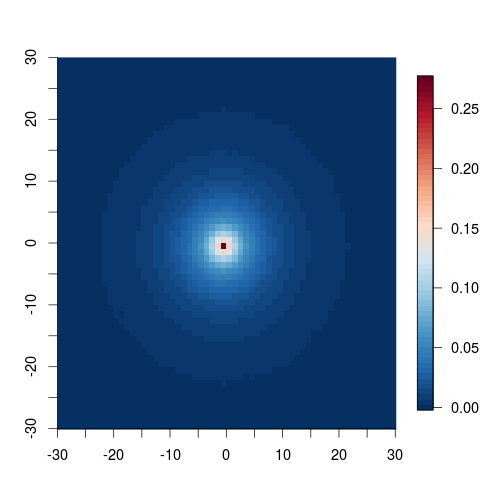}        
        \includegraphics[width=0.32\textwidth]{./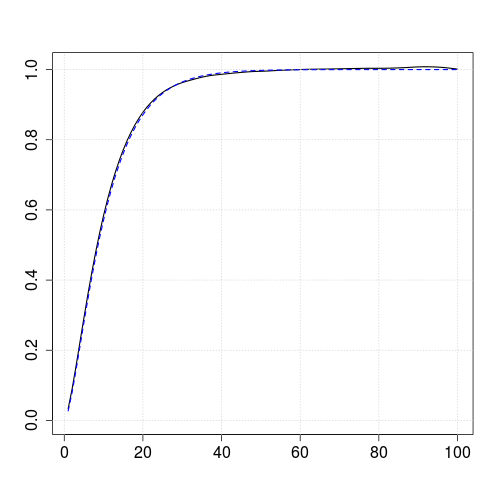}
        \vspace{-15pt}
        \caption{Range = 25, Smoothness = 1 : Order = 76 (Effective order = 52)}
    \end{subfigure}
    \\
    \vspace{-10pt}
    \begin{subfigure}[t]{\textwidth}
    	\centering
        \includegraphics[width=0.32\textwidth]{./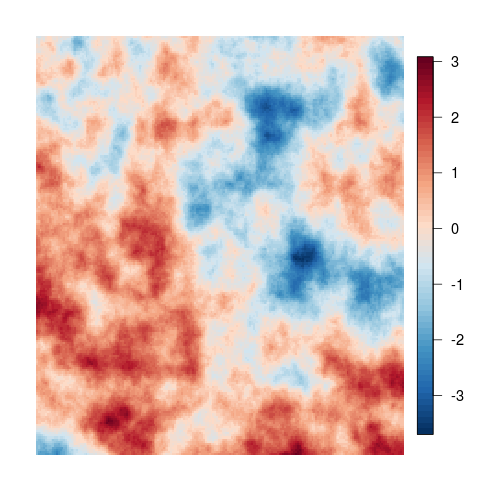}
    	 \includegraphics[width=0.32\textwidth]{./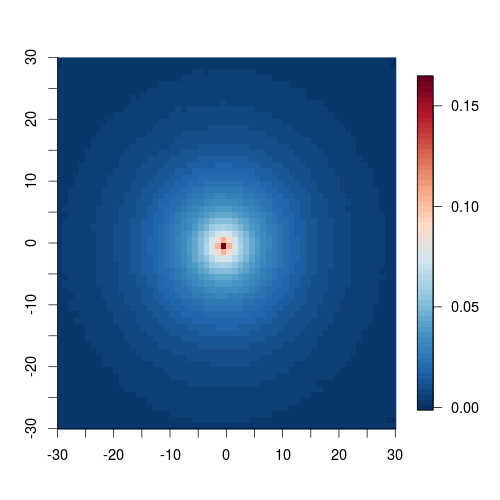}        
        \includegraphics[width=0.32\textwidth]{./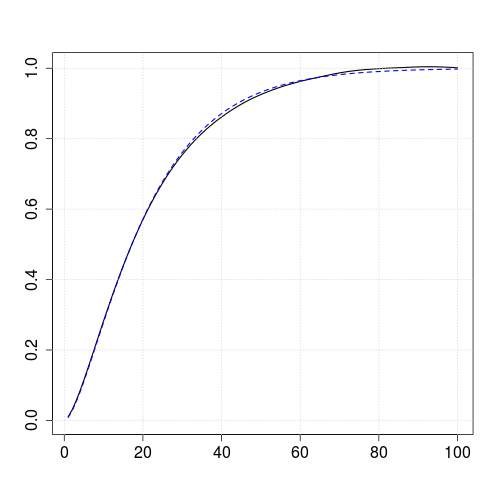}
        \vspace{-15pt}
        \caption{Range = 50, Smoothness = 1 : Order = 166 (Effective order = 102)}
    \end{subfigure}
    \\
    \vspace{-10pt}
    \begin{subfigure}[t]{\textwidth}
    	\centering
        \includegraphics[width=0.32\textwidth]{./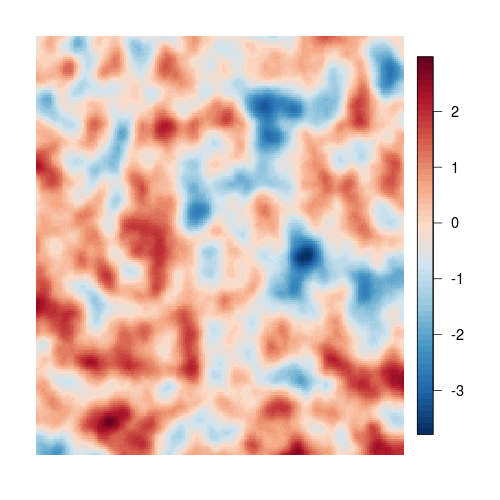}
    	\includegraphics[width=0.32\textwidth]{./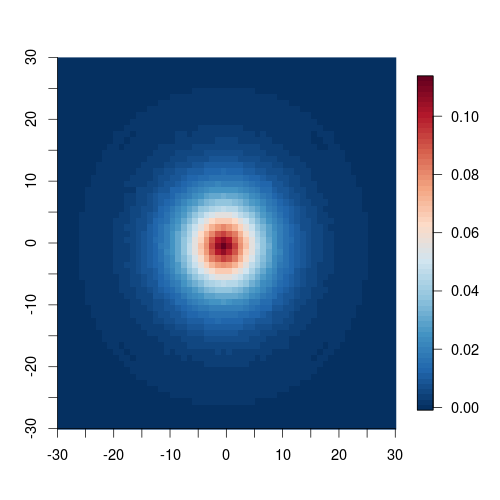}        
        \includegraphics[width=0.32\textwidth]{./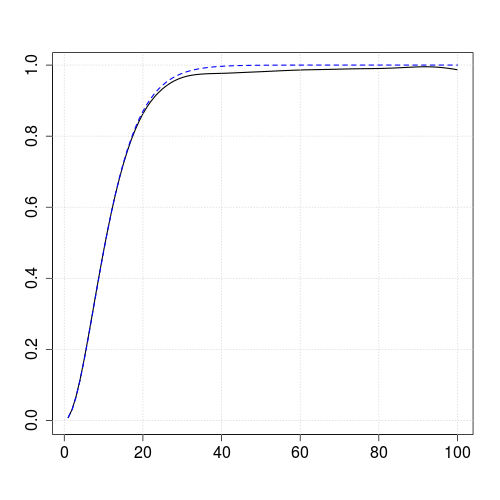}
        \vspace{-15pt}
        \caption{Range = 25, Smoothness = 3 : Order = 84 (Effective order = 40)}
    \end{subfigure}
        \vspace{-12pt}        
        \caption{\textit{(Left)} Simulations using Chebyshev approximation of a Matérn field on a 200x200 grid with various model parameters. \textit{(Center)} Convolution kernels associated with the simulation. \textit{(Right)} Mean variograms  over 50 simulations (solid line) and model (dotted line).}
    \label{fig_param_stat}
\end{figure}

\subsection{Simulation of non-stationary fields}

\begin{figure}[b!]
	\centering   
\vspace{-25pt}
	\begin{subfigure}[t]{0.8\textwidth}
		\includegraphics[width=\textwidth]{./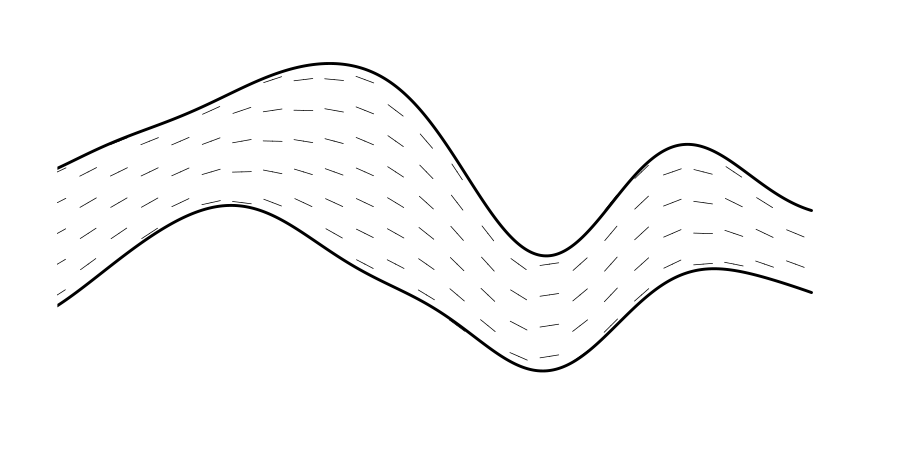}
        \vspace{-50pt}
        \caption{Anisotropy field : each line represents the local direction of the anisotropy. The anisotropy ratio is locally proportional to the thickness of the layer (maximum value : 1.5).}
    \end{subfigure}
	\hfill
	
	\vspace{-30pt}
	
    \begin{subfigure}[t]{0.8\textwidth}
		\includegraphics[width=\textwidth]{./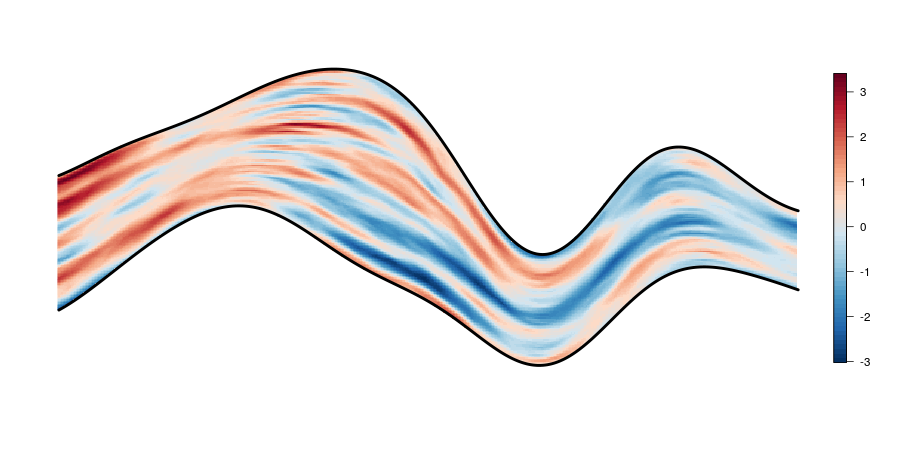}
        \vspace{-50pt}
        \caption{Simulation}
    \end{subfigure}
	\hfill
	
	\vspace{-20pt}
    \caption{Simulation using Chebyshev approximation of a non-stationary Matérn field with local anisotropy defined on a layer. Overall extension of the domain : 500x200. Model : range=150, sill=1.}
    \label{fig_nstat}
\end{figure}

Following the modeling approach of Fuglstad et al. \cite{fuglstad2015exploring}, an expression for the precision matrix of a finite element solution of the SPDE \eqref{spdeMatern} with locally varying coefficients is derived, resulting in a non-stationary field with local anisotropies. Specifically, let's consider a numerical solution of the SPDE:
\begin{equation*}
\kappa^2Z(x)-\mathop{\text{div}}\left(H(x)\nabla Z(x)\right)=\tau \mathcal{W}(x)
\end{equation*}
where $\kappa, \tau >0$, $\mathcal{W}$ is a spatial Gaussian white noise and $H$ is a field of positive definite matrices indexed by the space.
Using weak formulations of this SPDE, it can be shown that the precision matrix of the weights has the same expression as in the stationary case presented in the previous subsection (with $\alpha=2$) but where the matrix $\bm G$ is now defined by:
\begin{equation*}
\bm G = [\langle \nabla\psi_i, H\nabla\psi_j \rangle]
\end{equation*}
Notice that the matrix $\bm G$ contains all the "anisotropy" parameters of the model in the sense that, similarly as in \cite{fuglstad2015exploring}, its elements are constructed by locally accounting for the varying coefficients of the SPDE in expressions \eqref{el_C_G}. 

Just like in Subsection \ref{subec:sim_mat}, the Chebyshev approximation simulation algorithm to generate realizations of solutions of the SPDE. The results are displayed in Figure \ref{fig_nstat}.

\section{Conclusion}
In this paper, an algorithm for simulating Gaussian random vectors whose precision matrix can be expressed as a polynomial $\mathrm{P}$ of a sparse matrix $\bm S$ is proposed. This case arises in particular when simulation is carried out by solving numerically some SPDEs using finite element method. In doing so, random fields with a Matérn covariance can be generated. But this approach can also be used to simulate non-stationary random fields arising when SPDEs with varying coefficients are considered \cite{fuglstad2015exploring}.

The proposed algorithm is based on a computationally efficient polynomial approximation of a square-root of the covariance matrix, using a Chebyshev polynomial approximation of the function $1/\sqrt{\mathrm{P}}$, and can generate random vectors in $\mathcal{O}(Kn_{nz})$ operations, where $K$ is the order of approximation and $n_{nz}$ is the number of non-zeros of $\bm S$. The error of approximation on $1/\sqrt{\mathrm{P}}$ was proved to be directly linked to the statistical properties of the random vectors generated by the algorithm, namely the variance of linear combinations of their components, providing a criterion on the level of approximation for the simulated vectors to satisfy the right properties within a given tolerance. Due to its low computational complexity and memory requirements, this algorithm can be used to generate large random vectors, arising for instance from the discretization of the solutions of SPDEs in 3 dimensions, even with large smoothness parameters. This approach is implemented in the R package RGeostats \cite{Ren2018}. 

The proposed algorithm is a tool particularly well suited for the simulation of random fields represented as numerical solutions of SPDEs using finite element methods, and in particular Matérn fields and their extensions to manifolds, and oscillating and non-stationary covariances as presented in \cite{lindgren2011explicit}. Although only the case $\alpha\in\mathbb{N}$ (in SPDE \eqref{spdeMatern}) has been examined, the generalization to non-integer values of $\alpha$ is quite straightforward, as the resulting fields can be approximated by a linear combinations of fields with $\alpha\in\mathbb{N}$ \cite{lindgren2011explicit}. The main limitation seems rather to come from the finite element method itself which for small values of $\nu=\alpha-d/2$ requires thinner and thinner meshes and hence, bigger and bigger matrix sizes.

Finally, the proposed algorithm could be applied to the simulation of more general random fields than Matérn fields. Indeed, the  function $1/\sqrt{\mathrm{P}}$ approximated by the algorithm and used to generate the simulations, is proportional, for Matérn fields, to the actual square-root of the spectral density of the field. A way of generalizing the algorithm, which is under study and shows promising results, is to generate simulations of general Gaussian isotropic random fields using the same finite elements matrices as the one used here, but by replacing the approximation of the function $1/\sqrt{\mathrm{P}}$ by the approximation of the square-root of the spectral density of the targeted field.

%%%%%%%%%%%%%%%%%%%%%%%%%%%%%%%%%%%%%%%%%%%%%%%%%%%%%%%%%% Bibliography:
\nocite{*}
\bibliographystyle{siam}
\bibliography{biblio.bib}
\addcontentsline{toc}{section}{References}

%%%%%%%%%%%%%%%%%%%%%%%%%%%%%%%%%%%%%%%%%%%%%%%%%%%%%%%%%% Appendix:

\appendix

\section{Some elements of matrix analysis}
\label{appen:mat_ana}

Let $\bm n \in \mathbb{N}^*$. Let $\bm M=(M_{ij})_{i,j \in [\![1,n]\!]}$ be a real symmetric $n\times n$ matrix. $\bm M$ is diagonalizable in an orthonormal basis. Denote $\lambda_{\min}=\lambda_1 \le \dots \le \lambda_n=\lambda_{\max}$ its eigenvalues, and $(\bm v^{(1)}, \dots, \bm v^{(n)})$ the corresponding eigenvectors (forming the orthonormal basis).

\subsection{Rayleigh quotient}
\label{appen:ray}

\begin{definition}
The \textbf{Rayleigh quotient} $R(\bm M, \bm x)$ associated with $\bm M$ and $\bm x \in (\mathbb{R}^n)^*$ is the ratio:
$$R(\bm M, \bm x)=\frac{\tr{\bm x}\bm M \bm x}{\tr{\bm x}\bm x}=\tr{\left(\frac{\bm x}{\Vert \bm x\Vert}\right)}\bm M \left(\frac{\bm x}{\Vert \bm x\Vert}\right)$$
\end{definition}

\begin{prop}
$\forall \bm x \in (\mathbb{R}^n)^*, \quad \lambda_{\min}\le R(\bm M, \bm x)\le \lambda_{\max}$
\end{prop}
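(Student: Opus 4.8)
The plan is to exploit the spectral decomposition of $\bm M$ that has already been recalled just above the statement: since $\bm M$ is real symmetric, the eigenvectors $(\bm v^{(1)},\dots,\bm v^{(n)})$ form an orthonormal basis of $\mathbb{R}^n$, with $\bm M\bm v^{(i)}=\lambda_i\bm v^{(i)}$ and $\lambda_{\min}=\lambda_1\le\dots\le\lambda_n=\lambda_{\max}$. The idea is simply to rewrite the Rayleigh quotient in this basis and recognize it as a weighted average of the eigenvalues.

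First I would fix an arbitrary $\bm x\in(\mathbb{R}^n)^*$ and expand it as $\bm x=\sum_{i=1}^n c_i\bm v^{(i)}$ with $c_i=\tr{(\bm v^{(i)})}\bm x$. Using orthonormality of the basis, I would compute the denominator $\tr{\bm x}\bm x=\sum_{i=1}^n c_i^2$, which is strictly positive since $\bm x\ne 0$. For the numerator, I would use $\bm M\bm x=\sum_{i=1}^n c_i\lambda_i\bm v^{(i)}$ and again orthonormality to get $\tr{\bm x}\bm M\bm x=\sum_{i=1}^n\lambda_i c_i^2$. Hence
$$R(\bm M,\bm x)=\frac{\sum_{i=1}^n\lambda_i c_i^2}{\sum_{i=1}^n c_i^2},$$
which is a convex combination of $\lambda_1,\dots,\lambda_n$ with weights $c_i^2/\sum_j c_j^2$.

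Finally, since $\lambda_{\min}\le\lambda_i\le\lambda_{\max}$ for every $i$, multiplying these inequalities by the nonnegative weights $c_i^2$ and summing gives $\lambda_{\min}\sum_i c_i^2\le\sum_i\lambda_i c_i^2\le\lambda_{\max}\sum_i c_i^2$; dividing through by $\sum_i c_i^2>0$ yields $\lambda_{\min}\le R(\bm M,\bm x)\le\lambda_{\max}$, as claimed.

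There is no real obstacle here: the only things to be careful about are that $\sum_i c_i^2\ne 0$ (guaranteed by $\bm x\ne\bm 0$ and the fact that the $\bm v^{(i)}$ form a basis) and the bookkeeping with orthonormality in the two inner products. One could optionally note that the bounds are attained at $\bm x=\bm v^{(1)}$ and $\bm x=\bm v^{(n)}$, but that is not needed for the stated inequality.
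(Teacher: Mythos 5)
Your proof is correct and follows essentially the same route as the paper: expand $\bm x$ in the orthonormal eigenbasis and recognize the Rayleigh quotient as a weighted average of the eigenvalues with nonnegative weights summing to one. You simply spell out a few steps (orthonormality computations, positivity of the denominator) that the paper leaves implicit.
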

\begin{proof}
Simply notice that $\bm x$ can be decomposed in the orthonormal basis $(\bm v^{(1)}, \dots, \bm v^{(n)})$ as:
$\bm x = \sum_{i=1}^n \alpha_i \bm v^{(i)}$
for some $(\alpha_1, \dots, \alpha_n)^T\in (\mathbb{R}^n)^*$. Then, 
$$R(\bm M, \bm x)= \frac{\sum_{i=1}^n \lambda_i \alpha_i^2}{ \sum_{j=1}^n \alpha_j^2}=\sum_{i=1}^n \lambda_i \frac{\alpha_i^2}{ \sum_{j=1}^n \alpha_j^2}$$
which is just a weighted sum of the eigenvalues with positive weights.
\end{proof}

\subsection{Eigenvalue bounds}
\label{appen:eig_bound}

%Suppose now that $\bm M$ is a positive semi-definite matrix. Then, $0\le \lambda_{\min}=\lambda_1 \le \dots \le \lambda_n=\lambda_{\max}$.
\begin{prop}
$\forall i\in[\![1,n]\!], \quad \vert \lambda_{i}\vert \le \sqrt{\trace\left(\bm M^2 \right)}$. \\ All the eigenvalues of $\bm M$ are therefore included in the interval $[-\sqrt{\trace\left(\bm M^2 \right)}, \sqrt{\trace\left(\bm M^2 \right)}]$.
\end{prop}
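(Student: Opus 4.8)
The plan is to diagonalize $\bm M$ and read off the spectrum of $\bm M^2$. Since $\bm M$ is real symmetric, there is an orthogonal matrix $\bm V$ (with $\bm V^{-1}=\tr{\bm V}$) such that $\bm M=\bm V\,\Diag(\lambda_1,\dots,\lambda_n)\,\tr{\bm V}$. Squaring gives $\bm M^2=\bm V\,\Diag(\lambda_1^2,\dots,\lambda_n^2)\,\tr{\bm V}$, so the eigenvalues of $\bm M^2$ are exactly $\lambda_1^2,\dots,\lambda_n^2$. This is the only structural fact needed, and it is immediate from the orthogonal diagonalizability recalled at the start of Appendix~\ref{appen:mat_ana}.

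Next I would compute the trace. Using the cyclic invariance of the trace, $\trace(\bm M^2)=\trace\!\big(\bm V\,\Diag(\lambda_j^2)\,\tr{\bm V}\big)=\trace\!\big(\Diag(\lambda_j^2)\,\tr{\bm V}\bm V\big)=\trace\,\Diag(\lambda_1^2,\dots,\lambda_n^2)=\sum_{j=1}^n\lambda_j^2$. (Equivalently, one can expand directly $\trace(\bm M^2)=\sum_{i,k}M_{ik}M_{ki}=\sum_{i,k}M_{ik}^2=\Vert\bm M\Vert_2^2$ and note that the Frobenius norm is basis-independent, so it equals $\sum_j\lambda_j^2$ when evaluated in the eigenbasis.) In particular $\trace(\bm M^2)\ge 0$, so its square root is well defined.

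Finally, fix $i\in[\![1,n]\!]$. Since each summand $\lambda_j^2$ is nonnegative, $\lambda_i^2\le\sum_{j=1}^n\lambda_j^2=\trace(\bm M^2)$, and taking square roots yields $\vert\lambda_i\vert\le\sqrt{\trace(\bm M^2)}$. As this holds for every index $i$, all eigenvalues of $\bm M$ lie in $\big[-\sqrt{\trace(\bm M^2)},\,\sqrt{\trace(\bm M^2)}\big]$, which is the stated conclusion. There is essentially no obstacle in this argument; the only point deserving a line of justification is the identity $\trace(\bm M^2)=\sum_j\lambda_j^2$, handled by either of the two computations above.
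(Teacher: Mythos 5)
Your proof is correct and follows essentially the same route as the paper: both rest on the identity $\trace(\bm M^2)=\sum_{j}\lambda_j^2$ and then bound $\lambda_i^2$ by the full sum (the paper does this with a slightly more roundabout case split on $\lambda_i=0$, which your nonnegativity argument handles directly). No gaps.
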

\begin{proof}
Notice that $\trace\left(\bm M^2 \right)=\sum_{j=1}^n\lambda_j^2$. Take $i\in[\![1,n]\!]$. If $\lambda_i= 0$, the statement is true. Otherwise, $\trace\left(\bm M^2 \right)=\lambda_{i}^2\left(1+\sum_{j\ne i}(\lambda_{j}/\lambda_{i})^2\right)\ge \lambda_{i}^2$ and therefore the statement is also true.
\end{proof}

\begin{theorem}
\emph{Gerschgorin circle theorem \cite{gerschgorin1931uber}.} \\
Any eigenvalue $\lambda$ of $\bm M$ satisfies:
\begin{equation*}
\lambda \in \mathop{\bigcup}\limits_{i\in [\![1,n]\!]} [M_{ii} - r_i, M_{ii} + r_i], \quad r_i=\sum\limits_{j\neq i} |M_{ij}|
\end{equation*}
\label{gers}
\end{theorem}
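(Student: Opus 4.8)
The plan is to use the classical coordinate-extraction argument: pick an eigenvector and examine the equation at its largest-magnitude entry. Let $\lambda$ be any eigenvalue of $\bm M$ with a nonzero eigenvector $\bm x=(x_1,\dots,x_n)^T$, so that $\bm M \bm x = \lambda \bm x$. Since $\bm M$ is real symmetric, $\lambda$ is real, which is exactly what allows the Gerschgorin disc $\{z : |z-M_{ii}|\le r_i\}$ to collapse to the real interval $[M_{ii}-r_i, M_{ii}+r_i]$ claimed in the statement; I would note this reduction at the start so that the remainder of the argument only needs to bound $|\lambda - M_{ii}|$.

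First I would choose an index $i$ achieving $|x_i| = \max_{k} |x_k|$; because $\bm x \ne 0$ this maximum is strictly positive, so $x_i \ne 0$ and we may divide by it later. Writing out the $i$-th coordinate of the identity $\bm M \bm x = \lambda \bm x$ gives $\sum_{j} M_{ij} x_j = \lambda x_i$, which I would rearrange by isolating the diagonal term as
$$(\lambda - M_{ii})\, x_i = \sum_{j\neq i} M_{ij}\, x_j.$$

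Next I would take absolute values and apply the triangle inequality, then divide through by $|x_i|>0$, using that $|x_j|/|x_i|\le 1$ for every $j$ by the maximality of $|x_i|$:
$$|\lambda - M_{ii}| = \left| \sum_{j\neq i} M_{ij}\, \frac{x_j}{x_i} \right| \le \sum_{j\neq i} |M_{ij}|\, \frac{|x_j|}{|x_i|} \le \sum_{j\neq i} |M_{ij}| = r_i.$$
This yields $M_{ii}-r_i \le \lambda \le M_{ii}+r_i$, i.e. $\lambda \in [M_{ii}-r_i, M_{ii}+r_i]$, and since this interval is one of those in the union indexed by $i \in [\![1,n]\!]$, the eigenvalue lies in the union, completing the proof.

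There is no serious obstacle here; the one point requiring care is the selection step, since the bound $|x_j|/|x_i|\le 1$ (and the legitimacy of dividing by $x_i$) hinges entirely on choosing $i$ to be a coordinate of maximal modulus rather than an arbitrary index. The index $i$ depends on the eigenvector and hence on $\lambda$, so the argument localizes each eigenvalue to \emph{some} interval in the union, not to a prescribed one — which is precisely what the union over $i$ in the statement accommodates.
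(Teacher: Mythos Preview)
Your proof is correct and follows essentially the same argument as the paper's: select the coordinate of maximal magnitude in an eigenvector, isolate the diagonal term in the corresponding row of $\bm M\bm x=\lambda\bm x$, and bound via the triangle inequality. The only addition is your explicit remark that $\lambda\in\mathbb{R}$ (from the symmetry of $\bm M$) is what reduces the Gerschgorin disc to a real interval, which the paper leaves implicit.
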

\begin{proof} Take $\lambda$ an eigenvalue of $\bm M$ and $\bm v=(v_1, \dots, v_n)^T$ an associated eigenvector. Let $i_0$ be the index of the component of $\bm v$ with the largest magnitude: $\forall j, \vert v_j\vert \le \vert v_{i_0}\vert$. \\
Then, given that $\bm M \bm v = \lambda \bm v$:
$\lambda v_{i_0}=\sum_{j=1}^n M_{i_0j}v_j= M_{i_0i_0}v_{i_0}+\sum_{j\neq i_0} M_{i_0j}v_j $, which gives:
\begin{nscenter}
$ (\lambda - M_{i_0i_0})v_{i_0}=\sum_{j\neq i_0} M_{i_0j}v_j $
\end{nscenter}
So:
\begin{nscenter}
$\vert \lambda - M_{i_0i_0} \vert \vert v_{i_0} \vert = \left\vert\sum_{j\neq i_0} M_{i_0j}v_j \right\vert \le  \sum_{j\neq i_0} \vert M_{i_0j}\vert \vert v_j \vert$
\end{nscenter}
And finally,
\begin{nscenter}
$\vert \lambda - M_{i_0i_0} \vert \le  \sum_{j\neq i_0} \vert M_{i_0j}\vert \frac{ \vert v_j \vert}{\vert v_{i_0} \vert }\le r_{i_0}$
\end{nscenter}

\end{proof}

\section{Proof of Proposition \ref{prop_approx_prec}}
\label{appen:proof}
\begin{proof}

Take $\bm v\in (\mathbb{R}^n)^*$. Let's perform a chi-squared test for the variance on hypothesis $H_0^v$. The statistic $t(\bm v)$ of this test is: 
$$t(\bm v)=(N-1)\frac{S^2(\bm v)}{\tr{\bm v}\bm{\Sigma}\bm v}$$
where $S^2(\bm v)$ is the (unbiased) sample variance defined as:
$$S^2(\bm v)=\frac{1}{N-1}\sum\limits_{i=1}^N \left(\tr{\bm v}\bm z_s^{(i)}-m(\bm v)\right)^2, \quad m(\bm v)=\frac{1}{N}\sum\limits_{i=1}^N \tr{\bm v}\bm z_s^{(i)}$$
$H_0^v$ will not be rejected with significance $\alpha$ if $t(\bm v)$ satisfies : 
$$\chi^2_{\frac{\alpha}{2},N-1}\leq t(\bm v) \leq \chi^2_{1-\frac{\alpha}{2},N-1}$$
where $\chi^2_{p,N-1}$ is the $p$-th quantile of the chi-squared distribution with $N-1$ degrees of freedom (denoted $\chi^2(N-1)$). 

In particular, the probability $R_{\alpha}(\bm v)$ that $H_0^v$ is rejected with significance $\alpha$ is given by:
\begin{align*}
R_{\alpha}(\bm v)&=1-P\left(\chi^2_{\frac{\alpha}{2},N-1}\leq t(\bm v) \leq \chi^2_{1-\frac{\alpha}{2},N-1}\right) \\
&=1-P\left(\frac{\tr{\bm v}\bm{\Sigma}\bm v}{\tr{\bm v}\bm{\Sigma}_s\bm v}\chi^2_{\frac{\alpha}{2},N-1}\leq t_{s}(\bm v) \leq \frac{\tr{\bm v}\bm{\Sigma}\bm v}{\tr{\bm v}\bm{\Sigma}_s\bm v}\chi^2_{1-\frac{\alpha}{2},N-1}\right)
\end{align*}
where $t_{s}(\bm v)$ is the statistic defined by: 
$$t_{s}(\bm v)=\frac{\tr{\bm v}\bm{\Sigma}\bm v}{\tr{\bm v}\bm{\Sigma}_s\bm v}t(\bm v)=(N-1)\frac{S^2}{\tr{\bm v}\bm{\Sigma}_s\bm v}$$
By definition, the sample $\left(\tr{\bm v}\bm z_s^{(1)}, \dots, \tr{\bm v}\bm z_s^{(N)}\right)$ is Gaussian with variance $\tr{\bm v} \bm{\Sigma}_s \bm v$. Hence, $t_{s}(\bm v)$ follows a $\chi^2(N-1)$ distribution.
So, if $F_{\chi^2(N-1)}$ denotes the cumulative distribution function of the $\chi^2(N-1)$ distribution: 
\begin{equation*}
R_{\alpha}(\bm v)=R_{\alpha}(X)=1-\left[F_{\chi^2(N-1)}\left(\chi^2_{1-\frac{\alpha}{2},N-1}X\right) - F_{\chi^2(N-1)}\left(\chi^2_{\frac{\alpha}{2},N-1}X \right)\right], \quad X:=\frac{\tr{\bm v}\bm{\Sigma}\bm v}{\tr{\bm v}\bm{\Sigma}_s\bm v}
\end{equation*}

If the variance of the simulated sample were to be equal to the true variance (i.e. $X=1$), the probability $R_{\alpha}$ of rejecting the test  would be $\alpha$, the type I error of the test. But here, this error depends on the ratio $X$. Due to the non-symmetry of the chi-squared distribution, it can even be smaller than the significance level $\alpha$ of the test. However, when the size of the sample increases, the $\chi^2$ distribution tends to regain symmetry and the minimum of the function $R_{\alpha}(X)$ tends to be achieved at $X=1$ for a value $\alpha$.

Denote $\gamma_\alpha(X)$ the ratio:
$$\gamma_\alpha(X)=\frac{R_\alpha(X)-\alpha}{\alpha}$$
This ratio measures the degradation of the type I error of test, due to the fact that the simulated sample has variance $\tr{\bm v}\bm{\Sigma}_s\bm v$ instead of $\tr{\bm v}\bm{\Sigma}\bm v$.
A condition on the value of $X$ so that $\gamma_\alpha(X) \leq \gamma$ for some fixed proportion $\gamma\geq 0$ can be expressed for some $\epsilon_{N,\gamma}>0$ as follows : 
\begin{equation}
\vert X-1  \vert \leq \epsilon_{N,\gamma} \Rightarrow \gamma_\alpha(X)\leq \gamma
\label{condX}
\end{equation}
For fixed values of $\gamma$, $\alpha$ and $N$ the value of $\epsilon_{N,\gamma}$ is determined numerically by finding the two roots of the equation $\gamma_\alpha(X)= \gamma$.

The condition on $X$ in \eqref{condX} can be written in terms of variances:
\begin{equation*}
\left\vert \frac{\tr{\bm v}(\bm{\Sigma}-\bm{\Sigma}_s)\bm v}{\tr{\bm v}\bm{\Sigma}_s\bm v} \right\vert \leq \epsilon_{N,\gamma}
\end{equation*}
which in turn, by denoting $\bm{\Sigma}_s ^{1/2}=\mathrm{P}_{\mbox{\scriptsize -1/2}}(\bm S)\bm D^{-1} $, can be expressed as: 
\begin{equation*}
\left\vert \tr{\left(\frac{\bm{\Sigma}_s ^{1/2}\bm v}{\Vert \bm{\Sigma}_s ^{1/2}\bm v \Vert }\right)}\mathrm{P}_{\mbox{\scriptsize -1/2}}(\bm S)^{-1}(\mathrm{P}(\bm S)^{-1}-\mathrm{P}_{\mbox{\scriptsize -1/2}}(\bm S)^2)\mathrm{P}_{\mbox{\scriptsize -1/2}}(\bm S)^{-1}\left(\frac{\bm{\Sigma}_s ^{1/2}\bm v}{\Vert \bm{\Sigma}_s ^{1/2}\bm v \Vert } \right)\right\vert \leq \epsilon_{N,\gamma}
\end{equation*}
Noticing the Rayleigh quotient (cf. Appendix \ref{appen:ray}) on the left side of the inequality, this relation can be satisfied \textit{for any $\bm v\in\mathbb{R}^n$} by simply imposing:
$$\lambda_{\max\text{ magn}}\left(\mathrm{P}_{\mbox{\scriptsize -1/2}}(\bm S)^{-1}(\mathrm{P}(\bm S)^{-1}-\mathrm{P}_{\mbox{\scriptsize -1/2}}(\bm S)^2)\mathrm{P}_{\mbox{\scriptsize -1/2}}(\bm S)^{-1}\right) \leq \epsilon_{N,\gamma}$$
where $\lambda_{\max \text{ magn}}(.)$ denotes the eigenvalue with the greatest magnitude of a matrix.
In turn, if $[a, b]$ denotes an interval containing all the eigenvalues of $\bm S$, this last condition can be satisfied by imposing : 
\begin{equation*}
\max\limits_{\lambda \in [a, b]} \left\vert \frac{1/\mathrm{P}(\lambda)-\mathrm{P}_{\mbox{\scriptsize -1/2}}(\lambda)^2}{\mathrm{P}_{\mbox{\scriptsize -1/2}}(\lambda)^2} \right| \leq \epsilon_{N,\gamma} 
\end{equation*}
\end{proof}

\end{document}